\newtheorem{Def}{Definition}[section]
\newtheorem{Thm}{Theorem}[section]
\newtheorem{Lem}{Lemma}[section]
\newtheorem{Prop}{Proposition}[section]
\newtheorem{Exam}{Example}[section]
\newtheorem{Ass}{Assumption}[section]
\begin{document}
\title{A weak limit theorem for a class of 
long range type \\ quantum walks in 1d
}
\author{Kazuyuki Wada\\ 
 {National Institute of Technology, Hachinohe college.}\\\ Hachinohe, 039-1192, Japan.\vspace{3mm}\\ E-mail address: wada-g@hachinohe.kosen-ac.jp  \\ Key words: Scattering theory, Quantum walks, Weak limit theorem.
\\
2010 AMS Subject Classification: 46N50, 47A40, 47B47, 60F05.}
\date{}
\maketitle
\begin{abstract}
We derive the weak limit theorem for a class of long range type quantum walks. To do it, we analyze spectral properties of a time evolution operator and prove that modified wave operators exist and are complete.
\end{abstract}

%%%%%%%%%%%%%%%%%%%%%%%%%%%%%%%%%%%%%%%%%%%%%%%%%%%%%%%%%%%%%%%%%%%%%%%%%%%%%%%%
\section{Introduction} 

Quantum walks have been introduced as a quantum counter part of classical random walks [1,5]. It is known that quantum walks have remarkable properties which are not seen in classical random walks. One of these properties appears in a \lq\lq weak limit theorem". In \cite{K}, Konno firstly derived the limit distribution of quantum walks. He also revealed that the shape of limit distribution is quite different from the normal distribution. 

Here to explain some results related to weak limit theorem, we briefly introduce a mathematical framework of quantum walks. The Hilbert space is 
\begin{equation*}
\mathcal{H}:=l^{2}(\mathbb{Z};\mathbb{C}^{2})=\{\Psi:\mathbb{Z}\rightarrow \mathbb{C}^{2}|\displaystyle\sum_{x\in\mathbb{Z}}\|\Psi(x)\|^{2}_{\mathbb{C}^{2}}<\infty\},
\end{equation*}
and time evolution operator is $U:=SC$ where 
\begin{equation*}
(S\Psi)(x)=\begin{bmatrix}\Psi^{(1)}(x+1) \\ \Psi^{(2)}(x-1)\end{bmatrix}, \hspace{2mm}(C\Psi)(x)=C(x)\Psi(x),\hspace{2mm}\Psi\in\mathcal{H},\hspace{3mm}x\in\mathbb{Z},
\end{equation*}
and $\{C(x)\}_{x\in\mathbb{Z}}\subset U(2)$. Let $\Psi_{0}\in\mathcal{H}$ ($\|\Psi\|=1$) be an initial state of a quantum walker. Then the quantum state after time $t\in\mathbb{Z}$ is given by $U^{t}\Psi_{0}$. 

For $\Psi_{0}\in\mathcal{H}$ with $\|\Psi_{0}\|=1$ and $t\in\mathbb{Z}$, $X_{t}$ be a $\mathbb{Z}$-valued random variable whose probability distribution is given by $\mathbb{P}(X_{t}=x)=\|(U^{t}\Psi_{0})(x)\|^{2}_{\mathbb{C}^{2}}$. Our interest is to find the random variable $V$ such that $X_{t}$ with a suitable scaling converges to $V$ as $t\rightarrow\infty$.
Konno considered space-homogeneous quantum walks in one dimension. It means that $C(x)=C_{0}$ ($x\in\mathbb{Z}$) for some $C_{0}\in U(2)$. He assume that the initial state $\Psi_{0}\in\mathcal{H}$ has a form of 
\begin{equation*}
\Psi(x)=\begin{cases} \begin{bmatrix}\alpha \\ \beta\end{bmatrix}\hspace{5mm}x=0, \\ 
\begin{bmatrix} 0 \\ 0\end{bmatrix}\hspace{5mm}\text{otherwise},
\end{cases}\hspace{3mm}(|\alpha|^{2}+|\beta|^{2}=1).
\end{equation*}
Then he showed that the existence of $\mathbb{R}$-valued random variable $V$ such that $X_{t}/t\rightarrow V$ as $t\rightarrow\infty$ in a weak sense through combinatrical arguments [8]. After that, Grimmett et al. gave a simple proof for Konno's result, extended to $d$-dimensional space-homogeneous quantum walks and removed the assumption related to initial states [6]. Their proof is based on an application of the discrete Fourier transform. A crucial contribution is to find the self-adjoint operator $\hat{V}_{0}$ which induces the random variable $V$. $\hat{V}_{0}$ is called an \lq\lq asymptotic velocity operator". To find the limit distribution of $X_{t}/t$ as $t\rightarrow\infty$, it suffices to find a suitable asymptotic velocity operator. Recently, a nonlinear quantum walk is considered in [9].

If we allow a coin operator $C$ to be depend on $x\in\mathbb{Z}$, it becomes difficult to obtain the weak limit theorem since the discrete Fourier transform does not work. To overcome this difficulty, Suzuki introduced the idea of spectral scattering theory for quantum walks. Here, we introduce the notion of short range type and long range type conditions:
\begin{Def}\normalfont
A coin operator $C$ satisfy a short (resp. long) range type condition if there exists $C_{0}\in U(2)$, $\kappa>0$, and $\gamma>1$ (resp. $ 1\ge\gamma>0$) such that
\begin{equation*}
\|C(x)-C_{0}\|_{\mathcal{B}(\mathbb{C}^{2})}\le \kappa (1+|x|)^{-\gamma},\hspace{3mm} x\in\mathbb{Z},
\end{equation*}
where $\|\cdot\|_{\mathcal{B}(\mathbb{C}^{2})}$ is the operator norm on $\mathbb{C}^{2}$.
\end{Def}
We assume that $C$ satisfies the short range type condition. We set $U_{0}:=SC_{0}$. Then the following wave operator
\begin{equation*}
W_{\pm}:=\text{s-}\displaystyle\lim_{t\rightarrow\pm\infty}U^{-t}U_{0}^{t}\Pi_{\text{ac}}(U_{0})
\end{equation*}
exist and are complete (i.e. \text{Ran}$W_{\pm}=\mathcal{H}_{\text{ac}}(U)$). Moreover, we can show that the absence of singular continuous spectrum of $U$ by Mourre theory [3,12]. We denote the asymptotic velocity operator of $U_{0}$ by $\hat{V}_{0}$. Suzuki showed that the limit distribution of $X_{t}/t$ as $t\rightarrow\infty$ is derived from a sum of the orthogonal projection onto the set of eigenvectors of $U$ and the spectral measure of $W_{+}^{\ast}\hat{V}_{0}W_{+}$. 

On the other hand, in the long range type condition, wave operators do not exist in general [16]. Thus it is not trivial problem how to get the limit distribution of $X_{t}/t$. 

In scattering theory for quantum mechanics, it is known that we have to introduce modified wave operators instead of wave operators. There are lots of results related to long range scattering theory [4, 11]. To introduce modified wave operators, it is important to introduce a suitable \lq\lq modifier" induced by the Hamiltonian. However, it is difficult to introduce a modifier in a context of quantum walks straightforwardly since the Hamiltonian corresponds to $U=SC$ is unknown in general. The generator of quantum walks is studied in [14].

In this paper, we derive the weak limit theorem for a class of $U$ whose coin operator satisfies the long range type condition. We assume that a coin operator $C$ has a form of 
\begin{equation*}
C(x)=\begin{bmatrix}e^{-i\xi(x)} & 0 \\ 0 & e^{i\xi(x)}\end{bmatrix}C_{0},
\end{equation*}
for some $\xi:\mathbb{Z}\rightarrow\mathbb{R}$ and $C_{0}\in U(2)$. As far as we know, this is the first result related to long range type quantum walks. To derive the weak limit theorem, it is important to show the absence of singular continuous spectrum of $U$ and existence of modified wave operators. We apply commutator theory for unitary operators under two Hilbert space settings established by Richard et al. [13] and Kato-Rosenblum type theorem established by Suzuki [15].

Contents of this paper is as follows. In section 2, we give a definition of a model in quantum walks and some fundamental properties are explained. In section 3, some facts in the commutator theory is introduced. In section 4, we show the absence of singular continuous spectrum of $U$ by applying the commutator theory explained in section 3. In section 5, we derive the weak limit theorem which is a main result in this paper.

\section{Definition of a model}
In this section we review some notations and fundamental results for quantum walks. The Hilbert space is given by
\begin{equation}
\mathcal{H}:=l^{2}(\mathbb{Z};\mathbb{C}^{2})=\Big\{\Psi:\mathbb{Z}\rightarrow\mathbb{C}^{2}\Big| \displaystyle\sum_{x\in\mathbb{Z}}\|\Psi(x)\|^{2}_{\mathbb{C}^{2}}<\infty\Big\},
\end{equation}
where $\|\cdot\|_{\mathbb{C}^{2}}$ is the norm on $\mathbb{C}^{2}$. We denote its inner product and norm by $\langle\cdot, \cdot \rangle_{\mathcal{H}}$ (linear in the right vector)  and $\|\cdot\|_{\mathcal{H}}$, respectively. If there is no danger of confusion, then we omit the subscript $\mathcal{H}$ of them. We introduce the following dense subspace of $\mathcal{H}$:
\begin{equation}
\mathcal{H}_{\text{fin}}:=\{\Psi\in\mathcal{H}| \exists N\in\mathbb{N}\text{ such that }\Psi(x)=0\text{ for all }|x|\ge N\}.
\end{equation}
Next we introduce two unitary operators $U$ and $U_{0}$. For $\Psi\in\mathcal{H}$, the shift operator $S$ is defined by
\begin{equation}
(S\Psi)(x):=\begin{bmatrix}\Psi^{(1)}(x+1) \\ \Psi^{(2)}(x-1)\end{bmatrix}, \hspace{5mm}x\in\mathbb{Z}.
\end{equation}
Let $C_{0}$ be a $2\times2$ unitary matrix. We introduce the coin operator $C$ as follows:
\begin{equation}\label{coin}
(C\Psi)(x):=C(x)\Psi(x), \hspace{5mm}C(x):=\begin{bmatrix}e^{-i\xi(x)} & 0 \\ 0 & e^{i\xi(x)} \end{bmatrix}C_{0},\hspace{5mm}x\in\mathbb{Z},
\end{equation}
where $\xi$ is a real-valued function on $\mathbb{Z}$.
Throughout in this paper, we identify $C_{0}$ as a unitary operator on $\mathcal{H}$ such that  $(C_{0}\Psi)(x)=C_{0}\Psi(x)$, $x\in\mathbb{Z}$. We set $U:=SC$ and $U_{0}:=SC_{0}$.

Next, we recall spectral properties of $U_{0}=SC_{0}$. We denote the discrete Fourier transform which is unitary from $\mathcal{H}$ to $\mathcal{K}:=L^{2}([0, 2\pi), \text{d}k/2\pi;\mathbb{C}^{2})$ and
\begin{equation*}
(\mathcal{F}\phi)(k):=\hat{\phi}(k)=\displaystyle\sum_{x\in\mathbb{Z}}\phi(x)e^{-ikx},\hspace{2mm}k\in[0, 2\pi), \hspace{2mm}\phi\in\mathcal{H}_{\text{fin}}.
\end{equation*}
We set $\hat{U}_{0}:=\mathcal{F}U_{0}\mathcal{F}^{-1}$. It is seen that $\hat{U}_{0}$ is a $U(2)$-valued multiplication operator given by
\begin{equation*}
\hat{U}_{0}(k)=\begin{bmatrix}e^{ik} & 0 \\ 0 & e^{-ik}\end{bmatrix}C_{0},\hspace{5mm}k\in[0, 2\pi).
\end{equation*}
Note that $C_{0}$ has a form of
\begin{equation*}
C_{0}=\begin{bmatrix} ae^{i\alpha} & be^{i\beta} \\ -be^{-i\beta+i\delta} & ae^{-i\alpha+i\delta}\end{bmatrix},
\end{equation*}
where $a, b\in[0, 1]$ with $a^{2}+b^{2}=1$, $\alpha, \beta\in[0, 2\pi)$ and $e^{i\delta}$ ($\delta\in[0, 2\pi)$) is the determinant of $C_{0}$. We denote an eigenvalue and a correspond normalized eigenvector by $\lambda_{j}(k)$ and $u_{j}(k)$ $(j=1, 2)$, respectively.

Let $B$ be a unitary or self-adjoint operator on $\mathcal{H}$. The sets $\sigma(B)$, $\sigma_{\text{p}}(B)$, $\sigma_{\text{c}}(B)$, $\sigma_{\text{ess}}(B)$ and $\sigma_{\text{ac}}(B)$ are called spectrum, pure point spectrum, continuous spectrum, essential spectrum and absolutely continuous spectrum of $B$, respectively.

\begin{Prop}\normalfont [12, Lemma 4.1]
(1) If $a=0$, then
\begin{equation*}
\lambda_{1}(k)=ie^{i\delta/2},\hspace{3mm}\lambda_{2}(k)=-ie^{i\delta/2},
\end{equation*}
and
\begin{equation*}
\sigma(U_{0})=\sigma_{\text{p}}(U_{0})=\{ie^{i\delta}, -ie^{i\delta}\}.
\end{equation*}
(2) If $0<a<1$, then
\begin{equation*}
\lambda_{j}(k)=e^{i\delta/2}(\tau(k)+i(-1)^{j-1}\eta(k)), \hspace{3mm}j=1, 2,
\end{equation*}
where $\tau(k):=a\cos(k+\alpha-\delta/2)$ and $\eta(k):=\sqrt{1-\tau(k)^{2}}$. Moreover, it follows that
\begin{equation*}
\sigma(U_{0})=\sigma_{\text{c}}(U_{0})=\{e^{it}|t\in[\delta/2+\zeta, \pi+\delta/2-\zeta]\cup[\pi+\delta/2+\zeta, 2\pi+\delta/2-\zeta]\},
\end{equation*}
where $\zeta:=\arccos(a)$.
\\
(3) If $a=1$, then
\begin{equation*}
\lambda_{1}(k)=e^{i(k+\alpha)},\hspace{3mm}\lambda_{2}(k)=e^{-i(k+\alpha-\delta)},
\end{equation*}
and
\begin{equation*}
\sigma(U_{0})=\sigma_{\text{c}}(U_{0})=\mathbb{T}:=\{e^{it}|t\in[0, 2\pi)\}.
\end{equation*}
\end{Prop}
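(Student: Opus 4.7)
The plan is to pass to the Fourier side, where $\hat U_0 = \mathcal{F} U_0 \mathcal{F}^{-1}$ is the multiplication operator by $\hat U_0(k)$ on $\mathcal{K}$. Since this realizes $U_0$ as a direct integral of $2\times 2$ unitary matrices, its spectrum equals the closure of the essential range of the two eigenvalue functions $\lambda_1(k), \lambda_2(k)$, and a point is an eigenvalue of $U_0$ if and only if the corresponding level set of some $\lambda_j$ has positive Lebesgue measure. The absolutely continuous / continuous part of $\sigma(U_0)$ is then read off from the image of $\lambda_j$.

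The first concrete step is to compute the eigenvalues of $\hat U_0(k)$ from its characteristic polynomial. A direct calculation gives $\det \hat U_0(k) = \det C_0 = e^{i\delta}$ and, after simplification, $\mathrm{tr}\,\hat U_0(k) = 2 e^{i\delta/2}\tau(k)$, with $\tau(k) = a\cos(k+\alpha-\delta/2)$. The quadratic $\lambda^2 - 2 e^{i\delta/2}\tau(k)\,\lambda + e^{i\delta} = 0$ immediately yields
\[
\lambda_j(k) = e^{i\delta/2}\bigl(\tau(k) + i(-1)^{j-1}\eta(k)\bigr), \qquad j=1,2,
\]
with $\eta(k) = \sqrt{1-\tau(k)^2}$, which is real since $|\tau(k)|\le a \le 1$.

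The second step is the case analysis. When $a=0$ both $\tau$ and $\eta$ are constant in $k$, so the eigenvalues are the constants $\pm ie^{i\delta/2}$; taking the corresponding $k$-independent eigenvectors of $\hat U_0(k)$ produces infinite-dimensional eigenspaces of $U_0$, giving $\sigma(U_0) = \sigma_{\mathrm{p}}(U_0)$ in case (1). When $0<a<1$, I would parametrize $\tau(k) = \cos\phi(k)$ with $\phi(k)\in[\zeta,\pi-\zeta]$ and $\eta(k) = \sin\phi(k)$, so that $\lambda_1(k) = e^{i(\delta/2+\phi(k))}$ and $\lambda_2(k) = e^{i(\delta/2-\phi(k))}$ sweep out exactly the two arcs described in (2). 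When $a=1$ the matrix $\hat U_0(k)$ is already diagonal and each diagonal entry runs once around $\mathbb{T}$ as $k$ traverses $[0,2\pi)$, giving (3).

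The last step is to pin down the spectral type in cases (2) and (3). The functions $\lambda_j$ are real-analytic and non-constant in $k$, so every level set is finite, hence of Lebesgue measure zero; no eigenvalues arise and $\sigma(U_0) = \sigma_{\mathrm{c}}(U_0)$ coincides with the image of $\lambda_j$. The only place where some care is needed, and the step I expect to be the main obstacle, is handling the critical points of $\tau(k)$ in case (2), where $\tau'(k)=0$ at the endpoints $k+\alpha-\delta/2 \in\{0,\pi\}$; these do not create eigenvalues, but they are the points that realize the boundary values $\tau = \pm a$ and hence the arc endpoints $\delta/2 \pm \zeta$ (and their $\pi$-translates), so a small continuity argument is needed to confirm that these endpoints lie in $\sigma(U_0)$ and that no gaps open inside the two arcs.
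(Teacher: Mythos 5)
Your proposal is correct and follows exactly the standard route: the paper itself gives no proof of this proposition but imports it from [12, Lemma 4.1], and the argument there is precisely your computation of $\det\hat{U}_0(k)=e^{i\delta}$ and $\mathrm{tr}\,\hat{U}_0(k)=2e^{i\delta/2}\tau(k)$, followed by reading off the spectrum of the multiplication operator from the ranges of the analytic, non-constant eigenvalue functions (your worry about the arc endpoints is unnecessary, since $\tau$ actually attains $\pm a$ and the intermediate value theorem fills the arcs). Note that your case (1) correctly gives $\sigma_{\mathrm{p}}(U_0)=\{ie^{i\delta/2},-ie^{i\delta/2}\}$, consistent with the displayed formulas for $\lambda_j(k)$; the set $\{ie^{i\delta},-ie^{i\delta}\}$ in the statement is a typo (harmless here, since the paper later restricts to $a\in(0,1]$).
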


In what follows we assume that $a\in(0,1]$ ($C_{0}$ is not off-diagonal) to avoid a trivial case. 

For a given coin operator $C$ defined in (\ref{coin}), we introduce an important assumption:
\begin{Ass}\normalfont
Let $\xi:\mathbb{Z}\rightarrow\mathbb{R}$ be a function such that $\displaystyle\lim_{t\rightarrow\pm\infty}\xi(x)=0$. Then there exists $\theta:\mathbb{Z}\rightarrow\mathbb{R}$ such that
\begin{equation*}
\begin{cases}
\big|\xi(x)-\big\{\theta(x+1)-\theta(x)\big\}\big|\le \kappa (1+|x|)^{-1-\epsilon_{0}},
\\
\big|\xi(x)-\big\{\theta(x)-\theta(x-1)\big\}\big|\le \kappa (1+|x|)^{-1-\epsilon_{0}},
\end{cases}\hspace{5mm}x\in\mathbb{Z},
\end{equation*}
with some constants $\kappa>0$ and $\epsilon_{0}>0$.
\end{Ass}
\begin{Exam}\normalfont
If $\xi(x)=(1+|x|)^{-1}$, $x\in\mathbb{Z}$. Then we choose $\theta$ as follows:
\begin{equation*}
\theta(x)=\begin{cases} \log (1+x),\hspace{5mm}\text{if }x\ge0
\\
-\log(1-x),\hspace{5mm}\text{if }x<0
\end{cases}
\end{equation*}
Then, there exists $\kappa>0$ such that 
\begin{equation*}
\begin{cases}
\big|\xi(x)-\big\{\theta(x+1)-\theta(x)\big\}\big|\le \kappa (1+|x|)^{-2},
\\
\big|\xi(x)-\big\{\theta(x)-\theta(x-1)\big\}\big|\le \kappa (1+|x|)^{-2},
\end{cases}
\hspace{5mm}x\in\mathbb{Z}.
\end{equation*}
\end{Exam}
\begin{Exam}\normalfont
We can consider a generalization of Example 2.1. For $0<p<1$, we set $\xi(x)=(1+|x|)^{-p}$, $x\in\mathbb{Z}$. Then we choose $\theta$ as
\begin{equation*}
\theta(x)=\begin{cases} \displaystyle\frac{1}{1-p}(1+x)^{1-p}, \hspace{5mm}\text{if $x\ge 0$}, \\ -\displaystyle\frac{1}{1-p}(1-x)^{1-p},\hspace{5mm}\text{if $x<0$}\end{cases}
\end{equation*}
Then, there exists $\kappa>0$ such that
\begin{equation*}
\begin{cases}
\big|\xi(x)-\big\{\theta(x+1)-\theta(x)\big\}\big|\le \kappa (1+|x|)^{-1-p},
\\
\big|\xi(x)-\big\{\theta(x)-\theta(x-1)\big\}\big|\le \kappa (1+|x|)^{-1-p},
\end{cases}\hspace{5mm}x\in\mathbb{Z}
\end{equation*} 
\end{Exam}
In what follows, we assume the existence of $\theta$ which satisfy Assumption 2.1. We introduce the $U(2)$-valued multiplication operator $J$ as follows:
\begin{equation}\label{modifier}
(J\Psi)(x):=J(x)\Psi(x),\hspace{5mm}J(x)=\begin{bmatrix} e^{i\theta(x)} & 0 \\ 0 & e^{i\theta(x)}\end{bmatrix},\hspace{3mm}x\in\mathbb{Z}, \hspace{3mm}\Psi\in\mathcal{H}.
\end{equation}
It is obvious that $J$ is unitary on $\mathcal{H}$. We set $\tilde{U_{0}}:=JU_{0}J^{-1}$. Then we can express $\tilde{U}_{0}$ as  $\tilde{U_{0}}=S\tilde{C}_{0}$, where $\tilde{C}_{0}:=S^{-1}JSC_{0}J^{-1}$. 
\begin{Prop}\normalfont
$\tilde{C}_{0}$ is a $U(2)$-valued multiplication operator on $\mathbb{Z}$ such that
\begin{equation*}
\tilde{C}_{0}(x)=\begin{bmatrix} e^{-i\{\theta(x)-\theta(x-1)\}} & 0 \\ 0 & e^{i\{\theta(x+1)-\theta(x)\}}\end{bmatrix}C_{0},\hspace{3mm}x\in\mathbb{Z}.
\end{equation*}
\end{Prop}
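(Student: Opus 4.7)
The plan is a direct component-wise computation. Since $\tilde{C}_0 = S^{-1}JSC_0J^{-1}$ and $J(x) = e^{i\theta(x)}I_{\mathbb{C}^{2}}$ is a scalar multiplication, the only subtlety is that $S$ and $S^{-1}$ shift the two components in opposite directions, so when $J$ is sandwiched between them the scalar phase $\theta$ gets sampled at \emph{different} arguments in the first and second components. I will just apply each factor to an arbitrary $\Psi \in \mathcal{H}$ from right to left and read off the result.

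First I would compute $(C_0 J^{-1}\Psi)(x) = e^{-i\theta(x)} C_0 \Psi(x)$, and then apply $S$, which by definition gives the vector with first component $e^{-i\theta(x+1)}(C_0\Psi(x+1))^{(1)}$ and second component $e^{-i\theta(x-1)}(C_0\Psi(x-1))^{(2)}$ at site $x$. Next I multiply by $J(x) = e^{i\theta(x)}$, combining the phases to $e^{-i\{\theta(x+1)-\theta(x)\}}$ in the upper entry and $e^{i\{\theta(x)-\theta(x-1)\}}$ in the lower entry (the latter comes from $e^{i\theta(x)}e^{-i\theta(x-1)}$).

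Finally I would apply $S^{-1}$, which acts as $(S^{-1}\Phi)(x) = \bigl[\Phi^{(1)}(x-1),\ \Phi^{(2)}(x+1)\bigr]^{\mathsf{T}}$ (one checks $SS^{-1} = I$ using the definition of $S$). This shift replaces $x$ by $x-1$ in the first component and by $x+1$ in the second component of what was just computed. In the first slot the argument of $C_0\Psi$ becomes $(x-1)+1 = x$ and the phase becomes $e^{-i\{\theta(x)-\theta(x-1)\}}$; in the second slot the argument becomes $(x+1)-1 = x$ and the phase becomes $e^{i\{\theta(x+1)-\theta(x)\}}$. Collecting,
\begin{equation*}
(\tilde{C}_0 \Psi)(x) = \begin{bmatrix} e^{-i\{\theta(x)-\theta(x-1)\}} & 0 \\ 0 & e^{i\{\theta(x+1)-\theta(x)\}}\end{bmatrix} C_0 \Psi(x),
\end{equation*}
which is exactly the claimed formula. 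Unitarity of the displayed matrix at each $x$ is obvious (diagonal with unimodular entries times $C_0 \in U(2)$), so $\tilde{C}_0$ is genuinely a $U(2)$-valued multiplication operator.

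There is no real obstacle here; the only thing to be careful about is tracking which component of the shift carries which site-index in $\theta$, and the fact that the \emph{sign} of the exponent in the second slot flips relative to the first because $S^{-1}$ shifts the second component up by one while $S$ shifted it down by one. All steps are valid on the dense set $\mathcal{H}_{\text{fin}}$ and extend to $\mathcal{H}$ by boundedness of $J$, $S$, $C_0$ and their inverses.
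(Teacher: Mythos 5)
Your computation is correct and matches the paper's proof in all essentials: the paper likewise verifies the formula by a direct component-wise evaluation of $S^{-1}JSJ^{-1}$ (after noting that the scalar $J^{-1}$ commutes with $C_0$, which you do implicitly by pulling $e^{-i\theta(x)}$ through $C_0$), with the same bookkeeping of which site-index of $\theta$ lands in which component. No gaps.
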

\begin{proof}
Since $J^{-1}$ and $C_{0}$ commute, it suffices to consider the form of $(S^{-1}JSJ^{-1})(x)$. For any $\Psi\in\mathcal{H}$, it is seen that
\begin{equation*}
(JSJ^{-1}\Psi)(x)=\begin{bmatrix}e^{i\{\theta(x)-\theta(x+1)\}}\Psi^{(1)}(x+1)  \\ e^{i\{\theta(x)-\theta(x-1)\}}\Psi^{(2)}(x-1)\end{bmatrix}.
\end{equation*}
Moreover, it follows that
\begin{equation*}
\begin{aligned}
(S^{-1}JSJ^{-1}\Psi)(x)
&=\begin{bmatrix} (JSJ^{-1}\Psi)^{(1)}(x-1) \\ (JSJ^{-1}\Psi)^{(2)}(x+1)\end{bmatrix}
\\
&=\begin{bmatrix} e^{-i\{\theta(x)-\theta(x-1)\}}\Psi^{(1)}(x) \\ e^{i\{\theta(x+1)-\theta(x)\}}\Psi^{(2)}(x)\end{bmatrix}
\\
&=\begin{bmatrix}e^{-i\{\theta(x)-\theta(x-1)\}} & 0 \\ 0 & e^{i\{\theta(x+1)-\theta(x)\}}\end{bmatrix}\begin{bmatrix}\Psi^{(1)}(x) \\ \Psi^{(2)}(x)
\end{bmatrix}
\end{aligned}
\end{equation*}
Thus the desired result follows.
\end{proof}
By proposition 2.2 and $|e^{is}-1|\le |s|$ for $s\in\mathbb{R}$, we have the following proposition:
\begin{Prop}\normalfont
For any $x\in\mathbb{Z}$, it follows that
\begin{equation*}
\Big\|C(x)-\tilde{C}_{0}(x)\Big\|_{\mathcal{B}(\mathbb{C}^{2})}\le 2\kappa (1+|x|)^{-1-\epsilon_{0}},
\end{equation*}
where $\|\cdot\|_{\mathcal{B}(\mathbb{C}^{2})}$ is the operator norm on $\mathbb{C}^{2}$.
\end{Prop}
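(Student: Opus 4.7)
The plan is to reduce the inequality to a pointwise bound on two scalar differences of the form $|e^{is}-e^{it}|$, and then invoke Assumption 2.1.

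First I would use the unitarity of $C_0$ on $\mathbb{C}^2$, which gives $\|AC_0\|_{\mathcal{B}(\mathbb{C}^2)}=\|A\|_{\mathcal{B}(\mathbb{C}^2)}$ for any $A$. Writing both $C(x)$ and $\tilde{C}_0(x)$ from the definitions and from Proposition 2.2 as a diagonal phase matrix times $C_0$, this reduces the estimate to
\begin{equation*}
\|C(x)-\tilde{C}_0(x)\|_{\mathcal{B}(\mathbb{C}^2)}=\left\|\begin{bmatrix}e^{-i\xi(x)}-e^{-i\{\theta(x)-\theta(x-1)\}} & 0 \\ 0 & e^{i\xi(x)}-e^{i\{\theta(x+1)-\theta(x)\}}\end{bmatrix}\right\|_{\mathcal{B}(\mathbb{C}^2)}.
\end{equation*}

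Next I would use the fact that the operator norm of a diagonal $2\times 2$ matrix equals the maximum of the moduli of its diagonal entries (and is in any case bounded by the sum of the moduli, which accounts for the factor $2$ in the stated bound). Each of the two entries has the form $e^{is}-e^{it}$, and the elementary inequality $|e^{is}-e^{it}|=|e^{i(s-t)}-1|\le |s-t|$ bounds them by $|\xi(x)-\{\theta(x)-\theta(x-1)\}|$ and $|\xi(x)-\{\theta(x+1)-\theta(x)\}|$ respectively.

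Finally, applying the two inequalities in Assumption 2.1 directly yields
\begin{equation*}
|e^{\pm i\xi(x)}-e^{\pm i\{\theta(x\pm 1)-\theta(x)\}\text{ or }\pm i\{\theta(x)-\theta(x-1)\}}|\le \kappa(1+|x|)^{-1-\epsilon_0},
\end{equation*}
and summing (or taking the max) of the two diagonal contributions gives the stated bound $2\kappa(1+|x|)^{-1-\epsilon_0}$.

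There is no real obstacle here — the argument is a short direct computation combining unitarity of $C_0$, the Lipschitz estimate $|e^{is}-1|\le|s|$, and Assumption 2.1. The only minor subtlety is choosing the estimate on the operator norm of a diagonal matrix that produces exactly the constant $2\kappa$ in the statement, but this is cosmetic; the essential content is that the phase modifier $J$ reshapes $U_0$ into an operator $\tilde{U}_0=S\tilde{C}_0$ whose coin differs from $C$ by a short range (summable, since $\epsilon_0>0$) perturbation, which is precisely what is needed so that short range scattering machinery can later be applied to the pair $(U,\tilde{U}_0)$.
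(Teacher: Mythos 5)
Your proposal is correct and follows exactly the route the paper intends: the paper's one-line justification ("By proposition 2.2 and $|e^{is}-1|\le|s|$") is precisely the computation you spell out, reducing to diagonal phase differences via unitarity of $C_0$ and then applying Assumption 2.1. (As you note, the max-of-moduli bound would even give the sharper constant $\kappa$; the stated $2\kappa$ is simply a generous choice and causes no issue.)
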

We introduce \lq\lq modified wave operators" as follows:
\begin{equation*}
W_{\pm}(U, U_{0}, J):=\text{s-}\displaystyle\lim_{t\rightarrow\pm\infty}U^{-t}JU_{0}^{t}\Pi_{\text{ac}}(U_{0}),
\end{equation*}
where $\Pi_{\text{ac}}(U_{0})$ is the orthogonal projection onto the absolutely continuous subspace of $U_{0}$.
\begin{Thm}\normalfont
$W_{\pm}(U, U_{0}, J)$ exist and are complete.
\end{Thm}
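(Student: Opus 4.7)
The plan is to reduce the modified wave operator to an ordinary wave operator between $U$ and the conjugated evolution $\tilde U_{0} = J U_{0} J^{-1}$, and then to invoke the Kato--Rosenblum type theorem of Suzuki~[15] for unitary operators. To carry this out, I would first exploit the unitary equivalence to write $J U_{0}^{t} = \tilde U_{0}^{t} J$ for every $t \in \mathbb{Z}$ and, by functional calculus, $J \Pi_{\text{ac}}(U_{0}) = \Pi_{\text{ac}}(\tilde U_{0}) J$. These identities give
\begin{equation*}
U^{-t} J U_{0}^{t} \Pi_{\text{ac}}(U_{0}) = U^{-t} \tilde U_{0}^{t} \Pi_{\text{ac}}(\tilde U_{0}) J,
\end{equation*}
so if the ordinary wave operator $W_{\pm}(U, \tilde U_{0}) := \slim_{t \to \pm\infty} U^{-t} \tilde U_{0}^{t} \Pi_{\text{ac}}(\tilde U_{0})$ exists and is complete, then $W_{\pm}(U, U_{0}, J) = W_{\pm}(U, \tilde U_{0}) J$ also exists, and since $J$ is unitary and maps $\mathcal{H}_{\text{ac}}(U_{0})$ bijectively onto $\mathcal{H}_{\text{ac}}(\tilde U_{0})$, it has range $\mathcal{H}_{\text{ac}}(U)$.

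Next I would verify the trace-class hypothesis of~[15]. By Proposition~2.3,
\begin{equation*}
\| C(x) - \tilde C_{0}(x) \|_{\mathcal{B}(\mathbb{C}^{2})} \le 2\kappa (1 + |x|)^{-1-\epsilon_{0}},
\end{equation*}
which is summable in $x$ since $\epsilon_{0} > 0$. Because $C - \tilde C_{0}$ is a block-diagonal multiplication operator on $\ell^{2}(\mathbb{Z}; \mathbb{C}^{2})$, its trace norm is bounded by $2\sum_{x \in \mathbb{Z}} \| C(x) - \tilde C_{0}(x) \|_{\mathcal{B}(\mathbb{C}^{2})} < \infty$, so $C - \tilde C_{0}$ is trace class, and hence so is $U - \tilde U_{0} = S(C - \tilde C_{0})$. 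Applying~[15] then yields existence and completeness of $W_{\pm}(U, \tilde U_{0})$, from which the theorem follows via the factorization above.

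The main obstacle has in effect been resolved earlier in the paper: the construction of the phase $\theta$ in Assumption~2.1, and the associated modifier $J$, are precisely what convert the long-range difference $C - C_{0}$ into the summable, trace-class difference $C - \tilde C_{0}$ of Proposition~2.3. With that in hand, the proof of Theorem~2.1 reduces to the algebraic intertwining through $J$ plus a direct appeal to an existing abstract Kato--Rosenblum-type result for unitaries; one should only take care, in the completeness step, that the identity $W_{\pm}(U, U_{0}, J) = W_{\pm}(U, \tilde U_{0}) J$ is indeed a composition of a partial isometry with initial space $\mathcal{H}_{\text{ac}}(\tilde U_{0})$ and a unitary carrying $\mathcal{H}_{\text{ac}}(U_{0})$ onto that space, so that ranges coincide and no absolutely continuous subspace is lost.
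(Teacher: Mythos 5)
Your proposal is correct and follows essentially the same route as the paper: conjugate by the unitary $J$ to reduce $W_{\pm}(U,U_{0},J)$ to the ordinary wave operators $W_{\pm}(U,\tilde U_{0})$, verify via Proposition~2.3 that $C-\tilde C_{0}$ (hence $U-\tilde U_{0}$) is trace class, and invoke Suzuki's Kato--Rosenblum-type theorem, with completeness following because $J$ carries $\mathcal{H}_{\text{ac}}(U_{0})$ onto $\mathcal{H}_{\text{ac}}(\tilde U_{0})$. Your added detail on the summability of the trace norm is a fine substitute for the paper's citation of [15, Lemma 2.1].
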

\begin{proof}
From Proposition 2.3, we can show that $C-\tilde{C}_{0}$ is trace-class [15, Lemma 2.1]. Thus $U-\tilde{U}_{0}$ is trace class. Then, it is seen that
\begin{equation*}
W_{\pm}(U, \tilde{U}_{0}):=\text{s-}\displaystyle\lim_{t\rightarrow\pm\infty}U^{-t}\tilde{U}^{t}_{0}\Pi_{\text{ac}}(\tilde{U}_{0})
\end{equation*}
exist and are complete ($\text{Ran}W_{\pm}=\mathcal{H}_{\text{ac}}(U)$) [15, Theorem 2.3]. Since $\tilde{U}^{t}_{0}=JU^{t}_{0}J^{-1}$ and $\Pi_{\text{ac}}(\tilde{U}_{0})=J\Pi_{\text{ac}}(U_{0})J^{-1}$, it is seen that 
\begin{equation*}
\begin{aligned}
\displaystyle\text{s-}\lim_{t\rightarrow\pm\infty}U^{-t}JU_{0}^{t}\Pi_{\text{ac}}(U_{0})&=\text{s-}\displaystyle\lim_{t\rightarrow\pm\infty}U^{-t}JU_{0}^{t}J^{-1}J\Pi_{\text{ac}}(U_{0})J^{-1}J
\\
&=\text{s-}\lim_{t\rightarrow\pm\infty}U^{-t}\tilde{U}_{0}^{t}\Pi_{\text{ac}}(\tilde{U}_{0})J
\\
&=W_{\pm}(U, \tilde{U}_{0})J.
\end{aligned}
\end{equation*}
This implies the existence of $W_{\pm}(U, U_{0}, J)$. Since $W_{\pm}(U, \tilde{U}_{0})$ are complete, we have $\text{Ran}(W_{\pm}(U, \tilde{U}_{0}))=\mathcal{H}_{\text{ac}}(U)$. Since $U_{0}$ has purely absolutely continuous spectrum (see Proposition  4.1 below), $J$ maps $\mathcal{H}_{\text{ac}}(U_{0})$ to $\mathcal{H}_{\text{ac}}(\tilde{U}_{0})$. Thus the completeness of $W_{\pm}(U, U_{0}, J)$ follows.
\end{proof}
\begin{Prop}\normalfont
It follows that
\begin{equation*}
\sigma_{\text{ess}}(U)=\sigma_{\text{ess}}(U_{0})=\begin{cases}
\{e^{it}|t\in[\delta/2+\zeta, \pi+\delta/2-\zeta]\cup[\pi+\delta/2+\zeta, 2\pi+\delta/2-\zeta]\},\hspace{3mm}\text{if }0<a<1,
\\
\mathbb{T}\hspace{102mm}\text{if }a=1.
 \end{cases}
\end{equation*}
\end{Prop}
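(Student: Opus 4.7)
The plan is to reduce the proposition to two independent facts: a direct reading of $\sigma_{\text{ess}}(U_0)$ from Proposition 2.1, and a compact-perturbation argument equating $\sigma_{\text{ess}}(U)$ with $\sigma_{\text{ess}}(U_0)$ via the intermediate operator $\tilde U_0$.

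First I would compute $\sigma_{\text{ess}}(U_0)$. Under the standing assumption $a \in (0,1]$, Proposition 2.1 asserts $\sigma(U_0) = \sigma_{\text{c}}(U_0)$ in both cases $0<a<1$ and $a=1$, so $\sigma_{\text{p}}(U_0) = \emptyset$ and every point of $\sigma(U_0)$ is a non-isolated limit point. Hence $\sigma_{\text{ess}}(U_0) = \sigma(U_0)$, which is precisely the right-hand side of the claim in each case.

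Next I would compare $U$ with $\tilde U_0 = J U_0 J^{-1}$. Since $J$ is unitary, the essential spectra of $\tilde U_0$ and $U_0$ coincide. By Proposition 2.3 we have the pointwise bound
\begin{equation*}
\|C(x) - \tilde C_0(x)\|_{\mathcal B(\mathbb C^2)} \le 2\kappa (1+|x|)^{-1-\epsilon_0}, \qquad x\in\mathbb Z,
\end{equation*}
and this decay is summable in $x$. Therefore the argument of [15, Lemma 2.1]—already invoked in the proof of Theorem 2.1—shows that $C - \tilde C_0$, and hence $U - \tilde U_0 = S(C-\tilde C_0)$, is trace-class; in particular it is compact.

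Finally I would apply the Weyl-type stability result for unitary operators: if two unitaries differ by a compact operator, they share the same essential spectrum. Applied to $U$ and $\tilde U_0$ this gives $\sigma_{\text{ess}}(U) = \sigma_{\text{ess}}(\tilde U_0) = \sigma_{\text{ess}}(U_0)$, closing the proof. The only point requiring minor care is the formulation of Weyl's theorem in the unitary category, but this is entirely standard (it can be transferred from the self-adjoint case through a Cayley transform, or argued directly using Riesz projections associated to arcs of $\mathbb T$ contained in the complement of the common essential spectrum). I do not foresee any genuine obstacle here; the result is essentially immediate from Propositions 2.1 and 2.3 together with the trace-class perturbation established in Theorem 2.1.
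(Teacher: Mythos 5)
Your proposal is correct and follows essentially the same route as the paper: compactness of $U-\tilde U_0=S(C-\tilde C_0)$ from Proposition 2.3, a Weyl-type stability theorem for unitary operators together with the unitary invariance of the essential spectrum to get $\sigma_{\text{ess}}(U)=\sigma_{\text{ess}}(\tilde U_0)=\sigma_{\text{ess}}(U_0)$, and Proposition 2.1 to identify $\sigma_{\text{ess}}(U_0)$ with $\sigma(U_0)$. The only cosmetic difference is that you pass through the trace-class property to get compactness, while the paper reads compactness directly off the norm decay in Proposition 2.3 and cites [10, Lemma 2.2] for the Weyl step.
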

\begin{proof}
From Proposition 2.3, $C-\tilde{C}_{0}$ is a compact operator. This implies that the compactness of $U-\tilde{U}_{0}=S(C-\tilde{C}_{0})$. By Lemma 2.2 of \cite{MS} and unitary invariance of essential spactrum, we have $\sigma_{\text{ess}}(U)=\sigma_{\text{ess}}(\tilde{U}_{0})=\sigma_{\text{ess}}(U_{0})$. The last equality follows from Proposition 2.1.
\end{proof}
\section{Commutator theory}
In this section, we recall some definitions and notations related to commutator theory. We mainly refer [2, 12]. We denote the set of bounded linear operators from a Hilbert space $\mathcal{H}_{0}$ to $\mathcal{H}$ by $\mathcal{B}(\mathcal{H}_{0},\mathcal{H})$ and $\mathcal{B}(\mathcal{H}):=\mathcal{B}(\mathcal{H}, \mathcal{H})$. Moreover, we denote the set of compact operators from $\mathcal{H}_{0}$ to $\mathcal{H}$ by $\mathcal{K}(\mathcal{H}_{0}, \mathcal{H})$ and $\mathcal{K}(\mathcal{H}):=\mathcal{K}(\mathcal{H}, \mathcal{H})$.

Let $T\in\mathcal{B}(\mathcal{H})$ and let $A$ be a self-adjoint operator on $\mathcal{H}$. We say that $T\in C^{k}(A)$ $(k\in\mathbb{N})$ if a $\mathcal{B}(\mathcal{H})$-valued map $\mathbb{R}\ni t\mapsto e^{-itA}Te^{itA}$ is belongs to $C^{k}$ class strongly. Especially in the case where $k=1$, it is known that $T\in C^{1}(A)$ if and only if a following form
\begin{equation*}
D(A)\ni \phi\mapsto \langle A\phi, T\phi\rangle-\langle \phi, TA\phi\rangle
\end{equation*}
can be continuously extended to the form on $\mathcal{H}$. We denote the operator correspond to continuous extension of the above form by $[A, T]$.

Here we introduce three regularity conditions which are stronger than $T\in C^{1}(A)$. $T\in C^{1, 1}(A)$ means that $T\in C^{1}(A)$ and 
\begin{equation*}
\displaystyle\int_{0}^{1}\|e^{-itA}Te^{itA}+e^{itA}Te^{-itA}-2S\|_{\mathcal{B}(\mathcal{H})}\displaystyle\frac{\text{d}t}{t^{2}}<\infty.
\end{equation*}
$T\in C^{1+0}(A)$ means that $T\in C^{1}(A)$ and 
\begin{equation*}
\displaystyle\int_{0}^{1}\|e^{-itA}[A, S]e^{itA}-[A, S]\|_{\mathcal{B}(\mathcal{H})}\displaystyle\frac{\text{d}t}{t}<\infty.
\end{equation*}
$T\in C^{1+\epsilon}$ for some $\epsilon>0$ means that $T\in C^{1}(A)$ and 
\begin{equation*}
\|e^{-itA}[A, S]e^{itA}-[A, S]\|_{\mathcal{B}(\mathcal{H})}\|\le \text{Const.}t^{\epsilon}\hspace{3mm}\text{for all $t\in(0, 1)$}
\end{equation*}
For above conditions, following inclusion relation holds [2, Section 5.2.4]:
\begin{equation*}
C^{2}(A)\subset C^{1+\epsilon}(A)\subset C^{1+0}(A)\subset C^{1, 1}(A)\subset C^{1}(A).
\end{equation*}

Next, we introduce two functions which are useful to consider the commutator theory for unitary operators which is introduced in \cite{RSTI}. For self-adjoint cases, see e.g. [2, Section 7.2]. We assume that $U\in C^{1}(A)$. For $T, S\in\mathcal{B}(\mathcal{H})$, we write $T\gtrsim S$ if there exists a compact operator $K\in\mathcal{K}(\mathcal{H})$ such that $T+K\ge S$. For $\theta\in\mathbb{T}$ and $\epsilon>0$, we set
\begin{equation*}
\Theta(\theta, \epsilon):=\{\theta'\in\mathbb{T}||\text{arg}(\theta-\theta')|<\epsilon\}, \hspace{5mm}E^{U}(\theta;\epsilon):=E^{U}(\Theta(\theta;\epsilon)).
\end{equation*}
where, $E^{U}(\cdot)$ is the spectral measure of $U$. Under above preparations, we introduce functions $\rho^{A}_{U}:\mathbb{T}\mapsto (-\infty, \infty]$ and $\tilde{\rho}^{A}_{U}:\mathbb{T}\mapsto (-\infty, \infty]$ by
\begin{equation*}
\rho^{A}_{U}(\theta):=\displaystyle\sup\{a\in\mathbb{R}|\exists \epsilon \text{ such that }E^{U}(\theta;\epsilon)U^{-1}[A, U]E^{U}(\theta;\epsilon)\ge aE^{U}(\theta;\epsilon)\},
\end{equation*}
and
\begin{equation*}
\tilde\rho^{A}_{U}(\theta):=\displaystyle\sup\{a\in\mathbb{R}|\exists \epsilon>0 \text{ such that }E^{U}(\theta;\epsilon)U^{-1}[A, U]E^{U}(\theta;\epsilon)\gtrsim aE^{U}(\theta;\epsilon)\}.
\end{equation*}
General facts related to commutator theory for unitary operators in one Hilbert space is considered in [12, Section 3.3]. The following fact is important to show the absence of singular continuous spectrum:
\begin{Thm}\normalfont [12, Theorem 3.6]
Let $U$ be a unitary operator and $A$ be a self-adjoint operator on $\mathcal{H}$. We assume either that $U$ has a spectral gap and $U\in C^{1, 1}(A)$ or $U\in C^{1+0}(A)$. Moreover, we also assume that there exists an open set $\Theta\subset\mathbb{T}$, $a>0$, and an operator $K\in\mathcal{K}(\mathcal{H})$ such that
\begin{equation*}
E^{U}(\Theta)U^{-1}[A, U]E^{U}(\Theta)\ge aE^{U}(\Theta)+K.
\end{equation*}
Then, $U$ has at most finitely many eigenvalues in $\Theta$, each one of finite multiplicity, and $U$ has no singular continuous spectrum in $\Theta$.
\end{Thm}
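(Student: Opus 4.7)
The plan is to follow the structure of the standard Mourre proof (as in [2] for the self-adjoint case) in the version developed for unitary operators by Richard et al.\ [13]. The argument splits into two essentially independent parts: finiteness of eigenvalues via a Virial identity, and absence of singular continuous spectrum via a limiting absorption principle (LAP).

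For the first part, the key tool is a Virial theorem for $U\in C^{1}(A)$: if $U\psi=e^{i\lambda}\psi$, then $\langle\psi, U^{-1}[A,U]\psi\rangle=0$. Formally this is immediate from $U^{-1}[A,U]=U^{-1}AU-A$ together with unitarity of the eigenvalue, and becomes rigorous under $C^{1}(A)$ regularity via a standard regularization of $A$. Applied to an orthonormal sequence $(\psi_{n})$ of eigenvectors whose eigenvalues lie in $\Theta$, the Mourre estimate then gives
\begin{equation*}
0=\langle\psi_{n}, U^{-1}[A,U]\psi_{n}\rangle\ge a+\langle\psi_{n}, K\psi_{n}\rangle,
\end{equation*}
and since $\psi_{n}\rightharpoonup 0$ weakly while $K$ is compact, the right-hand side tends to $a>0$, a contradiction. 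Hence $U$ has only finitely many eigenvalues in $\Theta$, each of finite multiplicity.

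The harder part is the LAP. Using the previous step and shrinking $\Theta$ to a compact subset $\Theta'$ disjoint from $\sigma_{\text{p}}(U)$, the Mourre estimate can be upgraded to a strict positivity $E^{U}(\Theta')U^{-1}[A,U]E^{U}(\Theta')\ge a'E^{U}(\Theta')$ by absorbing $K$. One then aims at
\begin{equation*}
\displaystyle\sup_{z\in\mathbb{C}\setminus\mathbb{T},\ \arg z\in\Theta'}\big\|\langle A\rangle^{-s}(U-z)^{-1}\langle A\rangle^{-s}\big\|_{\mathcal{B}(\mathcal{H})}<\infty,\hspace{3mm} s>1/2,
\end{equation*}
via Mourre's differential-inequality technique: one introduces an auxiliary resolvent such as $(U-z-i\epsilon U^{-1}[A,U])^{-1}$ sandwiched between $\langle A\rangle^{-s}$, and derives an a priori bound by differentiating an appropriate quadratic form in $\epsilon$, the Mourre estimate supplying the positivity of the derivative. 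The main obstacle lies precisely here: the regularity hypothesis ($U\in C^{1,1}(A)$ with a spectral gap, or $U\in C^{1+0}(A)$) is the minimal regularity needed to control the error terms involving iterated commutators so that the differential inequality closes, and this is the delicate technical input inherited from [2, 13].

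Finally, by Stieltjes inversion, the LAP yields absolute continuity of $\langle\phi, E^{U}(\cdot)\phi\rangle$ on $\Theta'$ for every $\phi\in\text{Ran}\,\langle A\rangle^{-s}$. Density of such $\phi$ in $\mathcal{H}$ and arbitrariness of $\Theta'\subset\Theta\setminus\sigma_{\text{p}}(U)$ together give the absence of singular continuous spectrum of $U$ in $\Theta$, completing the plan.
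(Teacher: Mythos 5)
First, a point of reference: the paper does not prove this statement at all --- it is quoted verbatim from [12, Theorem~3.6] (Richard--Suzuki--Tiedra de Aldecoa), whose proof in turn rests on the unitary Mourre theory of Fern\'andez--Richard--Tiedra de Aldecoa. So there is no internal proof to compare against; what can be judged is whether your outline is a viable reconstruction of that external argument. Structurally it is: the split into (i) finiteness of point spectrum via a virial identity and (ii) absence of singular continuous spectrum via a limiting absorption principle is exactly the route taken in the literature. Part (i) as you present it is essentially complete. The identity $\langle\psi,U^{-1}[A,U]\psi\rangle=\langle U\psi,AU\psi\rangle-\langle\psi,A\psi\rangle=0$ for $U\psi=e^{i\lambda}\psi$ is correct (made rigorous under $C^{1}(A)$ by the usual regularization $A_{\tau}:=A(1+i\tau A)^{-1}$), and the contradiction $0\ge a+\langle\psi_{n},K\psi_{n}\rangle\to a>0$ along an orthonormal sequence of eigenvectors in $E^{U}(\Theta)\mathcal{H}$ is sound.

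Part (ii), however, is a plan rather than a proof, and the gap sits exactly where the theorem's content lies. You name the differential-inequality method and an auxiliary resolvent, but you do not derive the inequality, and it is precisely in closing it that the hypotheses ``$U\in C^{1,1}(A)$ with a spectral gap'' or ``$U\in C^{1+0}(A)$'' are consumed; asserting that they are ``the minimal regularity needed to control the error terms'' restates the theorem rather than proving it. Two further points would need attention in a genuine write-up. First, the upgrade from the estimate with compact error $K$ to a strict estimate $E^{U}(\Theta')U^{-1}[A,U]E^{U}(\Theta')\ge a'E^{U}(\Theta')$ near non-eigenvalues requires an argument (shrinking spectral windows so that $KE^{U}(\Theta'')\to KE^{U}(\{\theta\})=0$ in norm when $\theta\notin\sigma_{\mathrm{p}}(U)$), not just the phrase ``absorbing $K$''. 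Second, your LAP and Stieltjes-inversion step are phrased for the real line; for a unitary operator the resolvent must be controlled as $z$ approaches $\mathbb{T}$ radially and the inversion uses the Poisson kernel for the disc --- a cosmetic but necessary adaptation. In short: right skeleton, first half solid, but the analytic core of the second half is deferred to the very references the paper cites.
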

To show theorem 3.1,  in addition, we introduce the commutator theory in a two Hilbert space setting. We consider an another triple ($\mathcal{H}_{0}$, $U_{0}$, $A_{0}$) in addition to $(\mathcal{H}, U, A)$, where $\mathcal{H}_{0}$ is a Hilbert space, $U_{0}$ is a unitary operator on $\mathcal{H}$ and $A_{0}$ is a self-adjoint operator on $\mathcal{H}_{0}$. We also introduce a identification operator $J\in\mathcal{B}(\mathcal{H}_{0}, \mathcal{H})$. Following general result is important:
\begin{Thm}\normalfont [11, Theorem 3.7]
We assume that 
\begin{enumerate}
\item $U_{0}\in C^{1}(A_{0})$ and $U\in C^{1}(A)$, 
\item $JU^{-1}_{0}[A_{0}, U_{0}]J^{\ast}-U^{-1}[A, U]\in\mathcal{K}(\mathcal{H})$,
\item $JU_{0}-UJ\in\mathcal{K}(\mathcal{H}_{0}, \mathcal{H})$,
\item For each $f\in C(\mathbb{C},\mathbb{R})$, $f(U)(JJ^{\ast}-1)f(U)\in\mathcal{K}(\mathcal{H})$,
\end{enumerate}
Then, it follows that $\tilde{\rho}_{U}^{A}\ge \tilde{\rho}^{A_{0}}_{U_{0}}$
\end{Thm}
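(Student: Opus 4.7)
\medskip
\noindent\textbf{Proof plan.} The plan is to transport the spectral localization inequality defining $\tilde{\rho}^{A_0}_{U_0}(\theta)$ from the $(\mathcal{H}_0,U_0,A_0)$ side to the $(\mathcal{H},U,A)$ side through the identification $J$, letting the discrepancies coming from $J$ be absorbed into the $\gtrsim$ relation (i.e., into compact perturbations). Fix $\theta\in\mathbb{T}$ and choose $a<\tilde{\rho}^{A_0}_{U_0}(\theta)$; by definition there exist $\epsilon>0$ and $K_0\in\mathcal{K}(\mathcal{H}_0)$ with
\begin{equation*}
E^{U_0}(\theta;\epsilon)\,U_0^{-1}[A_0,U_0]\,E^{U_0}(\theta;\epsilon)+K_0\ge a\,E^{U_0}(\theta;\epsilon).
\end{equation*}
The goal is to produce the analogous inequality for $(U,A)$ on a slightly smaller arc $\Theta(\theta;\epsilon')$, which yields $a\le\tilde{\rho}^A_U(\theta)$; letting $a\uparrow\tilde{\rho}^{A_0}_{U_0}(\theta)$ then gives the desired comparison.

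The central technical ingredient I would establish first is an ``intertwining modulo compacts'' lemma: for every $f\in C(\mathbb{T})$,
\begin{equation*}
Jf(U_0)-f(U)J\in\mathcal{K}(\mathcal{H}_0,\mathcal{H}).
\end{equation*}
Assumption (3) gives the case $f(z)=z$; multiplying the compact operator $JU_0-UJ$ on the left by $U^{-1}$ and on the right by $U_0^{-1}$ yields the case $f(z)=z^{-1}$, and a routine induction covers all Laurent polynomials. A Stone--Weierstrass density argument together with the norm-closedness of $\mathcal{K}(\mathcal{H}_0,\mathcal{H})$ extends the conclusion to every continuous $f$ on $\mathbb{T}$.

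With this lemma in hand, I would sandwich the starting inequality by $\tilde{\chi}(U_0)$, where $\tilde{\chi}\in C(\mathbb{T};[0,1])$ is supported in $\Theta(\theta;\epsilon)$ and $\tilde{\chi}\equiv 1$ on $\Theta(\theta;\epsilon')$ for some $\epsilon'<\epsilon$; since $\tilde{\chi}(U_0)E^{U_0}(\theta;\epsilon)=\tilde{\chi}(U_0)$, this produces
\begin{equation*}
\tilde{\chi}(U_0)\,U_0^{-1}[A_0,U_0]\,\tilde{\chi}(U_0)+K_0'\ge a\,\tilde{\chi}(U_0)^2,
\end{equation*}
with $K_0'=\tilde{\chi}(U_0)K_0\tilde{\chi}(U_0)$ compact. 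Conjugating by $J$, I would then (a) use the intertwining lemma to push $\tilde{\chi}(U_0)$ past $J$ to the outside and arrive at $\tilde{\chi}(U)$ sandwiches, (b) substitute $JU_0^{-1}[A_0,U_0]J^*$ by $U^{-1}[A,U]$ via hypothesis (2), and (c) collapse $\tilde{\chi}(U)JJ^*\tilde{\chi}(U)$ to $\tilde{\chi}(U)^2$ via hypothesis (4) applied with $f=\tilde{\chi}$. Each such substitution introduces a compact error sandwiched between bounded factors, and combining them gives
\begin{equation*}
\tilde{\chi}(U)\,U^{-1}[A,U]\,\tilde{\chi}(U)+K_1\ge a\,\tilde{\chi}(U)^2
\end{equation*}
for some $K_1\in\mathcal{K}(\mathcal{H})$. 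A final sandwich by $E^U(\theta;\epsilon')$, which $\tilde{\chi}(U)$ leaves invariant, yields
\begin{equation*}
E^U(\theta;\epsilon')\,U^{-1}[A,U]\,E^U(\theta;\epsilon')+K_2\ge a\,E^U(\theta;\epsilon'),
\end{equation*}
whence $a\le\tilde{\rho}^A_U(\theta)$ as required.

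The main obstacle is the compact-operator bookkeeping in the central conjugation step: each time an operator is pushed past $J$, replaced by its $(U,A)$-side counterpart, or has $JJ^*$ collapsed to $1$, a compact remainder appears that must be kept multiplied by bounded factors so that the total error stays in $\mathcal{K}(\mathcal{H})$. This relies essentially on the intertwining lemma holding in operator-norm topology (not merely strongly) and on applying hypothesis (4) with the specific cutoff $\tilde{\chi}$. No regularity beyond (1)--(4) is needed, since $U\in C^{1}(A)$ and $U_0\in C^{1}(A_0)$ already guarantee that $U^{-1}[A,U]$ and $U_0^{-1}[A_0,U_0]$ are bounded and self-adjoint, keeping every intermediate sandwich a bounded self-adjoint operator to which the $\gtrsim$ relation applies.
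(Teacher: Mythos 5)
The paper does not actually prove this theorem---it imports it verbatim from Richard--Suzuki--Tiedra de Aldecoa (the citation ``[11]'' appears to be a typo for the reference \cite{RSTII})---and your proposal correctly reconstructs the standard argument given there: the intertwining-modulo-compacts lemma $Jf(U_0)-f(U)J\in\mathcal{K}(\mathcal{H}_0,\mathcal{H})$ obtained from hypothesis (3) by Laurent-polynomial approximation and norm density, followed by transporting the localized estimate through $J$ using (2) for the commutator and (4), applied with $f=\tilde\chi$, to collapse $\tilde\chi(U)JJ^{\ast}\tilde\chi(U)$ to $\tilde\chi(U)^{2}$. The only bookkeeping worth making explicit is that each compact error may be symmetrized (so the operator inequalities stay between self-adjoint operators) and that $U^{-1}[A,U]=U^{\ast}AU-A$ is bounded self-adjoint by $U\in C^{1}(A)$; otherwise the argument is complete.
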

To apply the commutator theory for time evolution operator $U$ introduced in section 2, in what follows, we consider two triples $(\mathcal{H}, U, JA_{0}J^{\ast})$ and $(\mathcal{H}, U_{0}, A_{0})$. A following fact is useful to check the condition $U\in C^{1}(A)$ and the second condition in Theorem 4.2:
\begin{Thm}\normalfont [12, Corollary 3.11, Corollary 3.12]
Let $U_{0}\in C^{1}(A_{0})$. Suppose that $JA_{0}J^{\ast}$ is essentially self-adjoint on a set $\mathcal{D}$, and assume that
\begin{equation*}
\overline{BA_{0}\upharpoonright D(A_{0})}\in \mathcal{B}(\mathcal{H}), \hspace{3mm}\overline{B_{\ast}A_{0}\upharpoonright D(A_{0})}\in\mathcal{K}(\mathcal{H}), 
\end{equation*}
where $B:=JU_{0}-UJ_{0}$ and $B_{\ast}:=JU_{0}^{\ast}-U^{\ast}J$. Then, $U\in C^{1}(JA_{0}J^{\ast})$ and $JU_{0}^{-1}[A_{0}, U_{0}]J^{\ast}-U^{-1}[JA_{0}J^{\ast}, U]\in\mathcal{K}(\mathcal{H})$.
\end{Thm}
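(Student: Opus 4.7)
The plan is to derive both conclusions from a single algebraic identity. On the core $\mathcal{D}$ for $JA_{0}J^{\ast}$, I would rewrite the commutator form for $U$ using $UJ = JU_{0} - B$ on one side and, equivalently, $J^{\ast}U = U_{0}J^{\ast} - B_{\ast}^{\ast}$ on the other (the latter obtained by taking the adjoint of $U^{\ast}J = JU_{0}^{\ast} - B_{\ast}$). A short calculation then yields, as a sesquilinear form on $\mathcal{D}$,
\begin{equation*}
JA_{0}J^{\ast}U - UJA_{0}J^{\ast} = J[A_{0}, U_{0}]J^{\ast} + BA_{0}J^{\ast} - JA_{0}B_{\ast}^{\ast},
\end{equation*}
where the last two terms are interpreted through the closures of $BA_{0}\upharpoonright D(A_{0})$ and $B_{\ast}A_{0}\upharpoonright D(A_{0})$.

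For the first conclusion, each of the three terms on the right extends to a bounded form on $\mathcal{H}$: $J[A_{0}, U_{0}]J^{\ast}$ because $U_{0}\in C^{1}(A_{0})$, $BA_{0}J^{\ast}$ by the hypothesis $\overline{BA_{0}\upharpoonright D(A_{0})}\in\mathcal{B}(\mathcal{H})$, and $JA_{0}B_{\ast}^{\ast}$ as the adjoint form of $B_{\ast}A_{0}J^{\ast}$, which is bounded (in fact compact) by the second hypothesis. Essential self-adjointness of $JA_{0}J^{\ast}$ on $\mathcal{D}$ then propagates the identity from $\mathcal{D}$ to the full form on $D(JA_{0}J^{\ast})$, yielding $U\in C^{1}(JA_{0}J^{\ast})$ and identifying the operator $[JA_{0}J^{\ast}, U]$ with the right-hand side above. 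For the second conclusion, I would left-multiply by $U^{-1}$, subtract $JU_{0}^{-1}[A_{0}, U_{0}]J^{\ast}$, and apply the rearrangement $JU_{0}^{-1} - U^{-1}J = -U^{-1}BU_{0}^{-1}$ (from $B = JU_{0} - UJ$) together with the identity $U_{0}^{-1}[A_{0}, U_{0}] + A_{0} = U_{0}^{-1}A_{0}U_{0}$ to regroup the $B$-contributions; the resulting expression should factor through $\overline{B_{\ast}A_{0}}$ and its adjoint, both of which are compact by hypothesis.

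The main obstacle is the asymmetry between the two hypotheses: $\overline{BA_{0}}$ is only assumed bounded, while the conclusion requires compactness. The summand $BA_{0}J^{\ast}$ therefore cannot be made compact in isolation, and the regrouping in the second step must be arranged so that every surviving factor carries an occurrence of $\overline{B_{\ast}A_{0}}$ rather than $\overline{BA_{0}}$. Beyond this combinatorial bookkeeping, the technical care lies in justifying that the formal identity extends from $\mathcal{D}$ to a genuine operator identity after taking closures, and in verifying the domain conditions under which the products $BA_{0}$, $B_{\ast}A_{0}$, and $A_{0}B_{\ast}^{\ast}$ make sense in the corresponding forms.
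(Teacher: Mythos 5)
The paper does not prove this statement --- it is imported verbatim from [12, Corollaries 3.11--3.12] --- so there is no in-paper argument to compare against; your sketch is essentially the argument of that source. Your two main steps are correct: the identity $[JA_{0}J^{\ast},U]=J[A_{0},U_{0}]J^{\ast}+BA_{0}J^{\ast}-JA_{0}B_{\ast}^{\ast}$ as a form on $\mathcal{D}$ (obtained from $UJ=JU_{0}-B$ and $J^{\ast}U=U_{0}J^{\ast}-B_{\ast}^{\ast}$), together with the fact that boundedness of the commutator form of a \emph{bounded} operator on a core of $JA_{0}J^{\ast}$ suffices for $C^{1}$ membership, gives the first conclusion. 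The one step you leave as a hope (``should factor through $\overline{B_{\ast}A_{0}}$'') does in fact close, and the identity you already wrote down, $B_{\ast}=-U^{-1}BU_{0}^{-1}$, equivalently $B_{\ast}U_{0}=-U^{\ast}B$, is precisely what does it. Left-multiplying the commutator identity by $U^{-1}=U^{\ast}$ and using $U^{\ast}J=JU_{0}^{\ast}-B_{\ast}$ gives
\begin{equation*}
U^{-1}[JA_{0}J^{\ast},U]-JU_{0}^{-1}[A_{0},U_{0}]J^{\ast}
=-B_{\ast}[A_{0},U_{0}]J^{\ast}+U^{\ast}\,\overline{BA_{0}}\,J^{\ast}-U^{\ast}J\big(\overline{B_{\ast}A_{0}}\big)^{\ast},
\end{equation*}
and since
\begin{equation*}
B_{\ast}[A_{0},U_{0}]=B_{\ast}A_{0}U_{0}-B_{\ast}U_{0}A_{0}
=\overline{B_{\ast}A_{0}}\,U_{0}+U^{\ast}\,\overline{BA_{0}},
\end{equation*}
the two occurrences of the merely bounded operator $\overline{BA_{0}}$ cancel exactly, leaving $-\overline{B_{\ast}A_{0}}\,U_{0}J^{\ast}-U^{\ast}J(\overline{B_{\ast}A_{0}})^{\ast}\in\mathcal{K}(\mathcal{H})$. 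The remaining caveats you mention (making sense of the products on $\mathcal{D}$, e.g. $J^{\ast}\mathcal{D}\subset D(A_{0})$, and passing from the core to $D(JA_{0}J^{\ast})$) are genuine but are exactly the hypotheses built into [12]; in the present application $J$ is unitary, $\mathcal{D}=\mathcal{H}_{\mathrm{fin}}$, and they are harmless. (Note also the typo in the statement: $B:=JU_{0}-UJ_{0}$ should read $B:=JU_{0}-UJ$, as you correctly assumed.)
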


\section{Spectral analysis for quantum walks}
In this section, we show the absence of singular continuous spectrum of $U$. First, we introduce the asymptotic velocity operator of $U_{0}=SC_{0}$ by
\begin{equation*}
\widehat{V_{0}\psi}(k)=\displaystyle\sum_{j=1, 2}\displaystyle\frac{i\lambda_{j}(k)}{\lambda_{j}(k)}\langle u_{j}(k), \hat{\psi}(k)\rangle_{\mathbb{C}^{2}}u_{j}(k),\hspace{3mm}x\in [0, 2\pi),\hspace{3mm}\psi\in \mathcal{H}.
\end{equation*}
Note that $V_{0}$ is bounded and self-adjoint on $\mathcal{H}$. 

For any $\psi, \phi\in C([0, 2\pi), \mathbb{C}^{2})$, we introduce the operator $|\psi\rangle\langle\phi|:C([0, 2\pi), \mathbb{C}^{2})\rightarrow C([0,2\pi), \mathbb{C}^{2})$ by
\begin{equation*}
\big(|\psi\rangle\langle\phi|f\big)(k):=\langle \psi(k), f(k)\rangle_{\mathbb{C}^{2}} \phi(k),\hspace{3mm}f\in C([0, 2\pi, \mathbb{C}^{2}),\hspace{3mm}k\in[0, 2\pi)
\end{equation*}
This operator can be continuously extended to a bounded operator on $\mathcal{H}$. Moreover, we introduce the self-adjoint operator $P$ in $\mathcal{K}$ as follows:
\begin{equation*}
\begin{aligned}
D(P)&:=\{f\in \mathcal{K}|f\text{ is absolutely continuous }, f'\in\mathcal{K}, \text{ and }f(0)=f(2\pi)\},
\\
(Pf)&:=-if',\hspace{3mm}f\in D(P).
\end{aligned}
\end{equation*}
Under above notations, we introduce the operator $X$ by
\begin{equation*}
\widehat{X}f(k):=-\displaystyle\sum_{j=1, 2}(|u_{j}\rangle\langle u_{j}|P-i|u_{j}\rangle\langle u'_{j}|)f,\hspace{3mm}f\in\mathcal{F}\mathcal{H}_{\text{fin}}.
\end{equation*}
X is essentially self-adjoint [12, Lemma 4.3] and we denote the closure of $X$ by the same symbol. Moreover we introduce the following operator:
\begin{equation*}
A_{0}:=\displaystyle\frac{1}{2}(XV_{0}+V_{0}X).
\end{equation*}
$A_{0}$ is self-adjoint and essentially self-adjoint on $\mathcal{H}_{\text{fin}}$.
\begin{Prop}\normalfont [12, Proposition 4.5]
Following properties hold:
\begin{enumerate}
\item $U_{0}\in C^{1}(A_{0})$ and $U_{0}^{-1}[A_{0}, U_{0}]=V_{0}^{2}$.
\item $\rho^{A_{0}}_{U_{0}}=\tilde{\rho}^{A_{0}}_{U_{0}}$ and
\begin{enumerate}
\item if $a\in (0, 1)$, then $\tilde{\rho}^{A_{0}}_{U_{0}}(\theta)>0$ for $\theta\in \text{Int}(\sigma(U_{0}))$, $\tilde{\rho}^{A_{0}}_{U_{0}}(\theta)=0$ for $\theta\in\partial\sigma(U_{0})$, and $\tilde{\rho}^{A_{0}}_{U_{0}}(\theta)=\infty$ otherwise, 
\item if $a=1$, then $\tilde{\rho}^{A_{0}}_{U_{0}}(\theta)=1$ for all $\theta\in\mathbb{T}$.
\end{enumerate} 
\item If $a\in (0, 1)$, then $U_{0}$ has purely absolutely continuous spectrum and
\begin{equation*}
\sigma(U_{0})=\sigma_{\text{ac}}(U_{0})=\{e^{i\gamma}|\gamma\in[\delta/2+\zeta, \pi+\delta/2-\zeta]\cup [\pi+\delta/2+\zeta, 2\pi+\delta/2-\zeta]\}
\end{equation*}
\item If $a=1$, then $U_{0}$ has purely absolutely continuous spectrum and $\sigma(U_{0})=\sigma_{\text{ac}}(U_{0})=\mathbb{T}$.
\end{enumerate}
\end{Prop}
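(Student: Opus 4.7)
The plan is to reduce everything to a direct computation in the Fourier picture, where $\hat U_0(k)=\sum_{j=1,2}\lambda_j(k)|u_j(k)\rangle\langle u_j(k)|$ is a smooth matrix-valued multiplication operator on $\mathcal{K}$. By construction $\hat V_0$ is a spectral function of $\hat U_0$ built on the same eigenprojectors, so $[U_0,V_0]=0$; meanwhile $\hat X$ is defined so that its action mirrors $-iP$ on each band, with the Berry-type correction $-i|u_j\rangle\langle u_j'|$ compensating for the $k$-dependence of $u_j(k)$. A short calculation on $\mathcal{F}\mathcal{H}_{\text{fin}}$ then yields the clean identity $[X,U_0]=U_0V_0$. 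Combined with $[V_0,U_0]=0$ this gives
\[
[A_0,U_0]=\tfrac12\bigl([X,U_0]V_0+V_0[X,U_0]\bigr)=U_0V_0^2,
\]
which closes to all of $D(A_0)$ and proves (1); in fact smoothness of every $k$-dependent quantity upgrades this to $U_0\in C^\infty(A_0)$, well beyond the regularity required by Theorem 3.1.

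For (2), since $V_0^2$ commutes with $U_0$, the spectral theorem writes it as $g(U_0)$ for a continuous nonnegative function $g$ on $\sigma(U_0)$; explicitly, on each band $g(\lambda_j(k))$ equals the squared group velocity $v_j(k)^2$, which for $0<a<1$ reads $a^2\sin^2(k+\alpha-\delta/2)/\eta(k)^2$ and for $a=1$ equals $1$. Continuity of $g$ makes the infimum in the definition of $\rho^{A_0}_{U_0}(\theta)$ attained in the strong sense and insensitive to the compact perturbations allowed in $\tilde\rho^{A_0}_{U_0}$, whence $\rho^{A_0}_{U_0}=\tilde\rho^{A_0}_{U_0}$. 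Direct inspection of $g$ then gives the three regimes in (a): strictly positive on the interior of $\sigma(U_0)$, zero at the four band edges where $\sin(k+\alpha-\delta/2)=0$, and $+\infty$ outside $\sigma(U_0)$ where the spectral projector vanishes for small $\epsilon$. For $a=1$ the group velocity is $\pm 1$ on each sheet, giving the uniform value in (b).

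Items (3) and (4) follow by feeding (1) and the Mourre estimate of (2) into Theorem 3.1 on any compact subinterval of $\mathrm{Int}(\sigma(U_0))$: the regularity hypothesis is immediate since $U_0\in C^\infty(A_0)$, and strict positivity of $\tilde\rho^{A_0}_{U_0}$ rules out singular continuous spectrum there. Absence of eigenvalues is handled separately in the Fourier picture: an eigenvalue equation $U_0\hat\psi=\theta\hat\psi$ forces $\hat\psi(k)$ to be supported on the finite level set $\lambda_j^{-1}(\{\theta\})$, which is Lebesgue-null in $[0,2\pi)$, so $\hat\psi=0$. Since the finitely many band-edge values also carry zero spectral mass, pure absolute continuity on the open interior extends to all of $\sigma(U_0)$, yielding (3); the case $a=1$ in (4) is essentially the same argument with Mourre positivity on the entire torus.

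The main obstacle will be the commutator identity $[X,U_0]=U_0V_0$ underlying (1): the position operator $X$ is built with the Berry-connection correction $-i|u_j\rangle\langle u_j'|$ precisely so that this identity holds, and one must track matrix indices carefully to confirm that this correction cancels the derivatives of the eigenprojectors produced when $P=-i\partial_k$ is commuted past $\hat U_0(k)$. Once that algebraic identity is secured, everything else reduces to spectral-calculus bookkeeping and a straight invocation of the Mourre-type theorem.
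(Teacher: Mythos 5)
The paper offers no proof of this proposition: it is imported verbatim from [12, Proposition 4.5] (Richard--Suzuki--Tiedra de Aldecoa), so there is nothing in the text to compare against except the citation. Your sketch is a correct reconstruction of that reference's argument: the identities $[X,U_0]=U_0V_0$ and $[V_0,U_0]=0$ give $U_0^{-1}[A_0,U_0]=V_0^2$; the observation that $V_0^2=g(U_0)$ with $g(\lambda)=\bigl(a^2-\mathrm{Re}(e^{-i\delta/2}\lambda)^2\bigr)/\bigl(1-\mathrm{Re}(e^{-i\delta/2}\lambda)^2\bigr)$ for $0<a<1$ (and $g\equiv 1$ for $a=1$) yields the stated values of $\rho^{A_0}_{U_0}$, vanishing exactly at the four band edges where $\sin(k+\alpha-\delta/2)=0$; and the Mourre estimate together with the level-set argument gives pure absolute continuity (a singular continuous measure cannot live on the finite set of thresholds). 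The one step I would tighten is $\rho^{A_0}_{U_0}=\tilde\rho^{A_0}_{U_0}$: ``insensitivity to compact perturbations'' of a continuous symbol is not by itself an argument, since the discrepancy between the two functions is governed by eigenvalues of $U_0$, not by continuity of $g$. You should therefore establish $\sigma_{\mathrm{p}}(U_0)=\emptyset$ first --- your level-set argument does this directly and uses nothing from item (2) --- and only then conclude $\rho^{A_0}_{U_0}=\tilde\rho^{A_0}_{U_0}$; as written, items (2)--(4) are ordered circularly.
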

In what follows, we set $A:=JA_{0}J^{\ast}$. We show conditions in Theorem 3.2 for two triples $(\mathcal{H}, U, A)$ and $(\mathcal{H}, U_{0}, A_{0})$.
\begin{Lem}\normalfont
It follows that $U\in C^{1}(A)$ and $JU_{0}^{-1}[A_{0}, U_{0}]J^{\ast}-U^{-1}[A, U]\in\mathcal{K}(\mathcal{H})$.
\end{Lem}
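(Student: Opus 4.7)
The plan is to invoke Theorem 3.3 with $\mathcal{H}_0 = \mathcal{H}$, identification operator $J$, and the core $\mathcal{D} = \mathcal{H}_{\text{fin}}$. The input $U_0 \in C^1(A_0)$ is Proposition 4.1(1), and essential self-adjointness of $A := JA_0J^*$ on $\mathcal{H}_{\text{fin}}$ is automatic because the scalar phase multiplication $J = e^{i\theta(\cdot)}I_{\mathbb{C}^2}$ leaves $\mathcal{H}_{\text{fin}}$ invariant. Using $\tilde U_0 = JU_0J^{-1} = S\tilde C_0$ from Proposition 2.2, the remainder operators appearing in Theorem 3.3 are
\[
B = JU_0 - UJ = S(\tilde C_0 - C)\,J, \qquad B_* = JU_0^* - U^*J = (\tilde C_0 - C)^*\, S^*\, J,
\]
and Proposition 2.3 supplies the pointwise bound $\|(\tilde C_0 - C)(x)\|_{\mathcal{B}(\mathbb{C}^2)} \le 2\kappa (1+|x|)^{-1-\epsilon_0}$.

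The analytic task is to show $\overline{BA_0\upharpoonright D(A_0)}$ is bounded and $\overline{B_*A_0\upharpoonright D(A_0)}$ is compact. Since $S$, $S^*$, $J$ are bounded unitaries and $J$ commutes with the matrix-valued multiplication $(\tilde C_0 - C)$ (being scalar in the $\mathbb{C}^2$-variable), both reduce, modulo a translation-of-indices bookkeeping absorbing $S^*$, to the single statement that $(\tilde C_0 - C) A_0 \upharpoonright D(A_0)$ extends to a compact operator on $\mathcal{H}$. Since $A_0$ is self-adjoint, this is equivalent to compactness of the formal adjoint $A_0(\tilde C_0 - C)^*$, which I would factor as
\[
A_0 (\tilde C_0 - C)^* = \bigl(A_0\langle X\rangle^{-1}\bigr)\bigl(\langle X\rangle (\tilde C_0 - C)^*\bigr),
\]
where $\langle X\rangle := (1+X^2)^{1/2}$ in the position variable. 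The right factor is a matrix-valued multiplication operator whose multiplier has operator norm $\le 2\kappa\langle x\rangle^{-\epsilon_0}$ and tends to $0$ as $|x|\to\infty$, hence it is compact on $\mathcal{H}$. For the left factor, the formula $\hat X = -P + i\sum_j|u_j\rangle\langle u_j'|$ exhibits $X$ as the position operator plus a bounded perturbation, so together with the boundedness of $V_0$ appearing in $A_0 = \tfrac{1}{2}(XV_0 + V_0 X)$, a commutator estimate yields $\|A_0\phi\|\le C\|\langle X\rangle\phi\|$ on $\mathcal{H}_{\text{fin}}$; by closure $A_0\langle X\rangle^{-1}$ extends to a bounded operator on $\mathcal{H}$. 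The composition bounded times compact is compact, and taking adjoints delivers the same for $(\tilde C_0 - C) A_0$. Theorem 3.3 then gives both conclusions of the lemma simultaneously.

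The main obstacle is the relative bound $\|A_0\phi\| \le C\|\langle X\rangle\phi\|$. Because $A_0$ is not simply the position operator multiplied by a bounded operator, one must commute $V_0$ past the weight $\langle X\rangle$; this reduces through the Fourier transform to the boundedness of the $k$-derivative of $\hat V_0(k) = \sum_j (\partial_k\arg\lambda_j)\,|u_j\rangle\langle u_j|$, which in turn rests on the smoothness of the spectral data of $\hat U_0$. The only possible singularity, $1/\eta(k)$, is tamed by the uniform bound $\eta(k) \ge \sqrt{1-a^2} > 0$ when $a \in (0,1)$, while the case $a=1$ is explicit from Proposition 2.1(3); these are precisely the inputs already employed in Proposition 4.1(1).
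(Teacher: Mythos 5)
Your proposal is correct and follows essentially the same route as the paper: both verify the two hypotheses of Theorem 3.3 on $\mathcal{D}=\mathcal{H}_{\text{fin}}$ by writing $B=S(\tilde C_0-C)J$, $B_*=(\tilde C_0-C)^*S^*J$, extracting a position weight from $A_0$ (the paper uses the decomposition $A_0=QK+\tfrac{i}{2}H_0$ of [12, Lemma 4.10], you use the equivalent factorization through $A_0\langle Q\rangle^{-1}$ bounded), and observing that $(\tilde C_0-C)\langle Q\rangle$ is compact because its multiplier decays like $\langle x\rangle^{-\epsilon_0}$. The only difference is that you re-derive the relative bound $\|A_0\phi\|\le C\|\langle Q\rangle\phi\|$ from the structure of $X$ and $V_0$, whereas the paper cites it.
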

\begin{proof}
From Proposition 4.1, we know $U_{0}\in C^{1}(A_{0})$. Moreover, $JA_{0}J^{-1}$ is essentially self-adjoint on $\mathcal{D}=\mathcal{H}_{\text{fin}}$ since $J$ is unitary and $J\mathcal{H}_{\text{fin}}=\mathcal{H}_{\text{fin}}$. Now we check two conditions in Theorem 4.3. We note that $A_{0}$ has a following form on $\mathcal{H}_{\text{fin}}$:
\begin{equation*}
A_{0}=QK+\displaystyle\frac{i}{2}H_{0}
\end{equation*}
for some $K, H_{0}\in\mathcal{B}(\mathcal{H})$, where $Q$ is the position operator defined by
\begin{equation*}
D(Q):=\{\psi\in\mathcal{H}|\displaystyle\sum_{x\in\mathbb{Z}}x^{2}\|\psi(x)\|_{\mathbb{C}^{2}}<\infty\},\hspace{2mm}
(Q\psi)(x):=x\psi(x),\hspace{3mm}x\in\mathbb{Z}.
\end{equation*}
For more details, see the proof of [12, Lemma 4.10]. On $\mathcal{H}_{\text{fin}}$, it follows that
\begin{equation*}
\begin{aligned}
BA_{0}=(JU_{0}J^{\ast}-U)J(QK+\displaystyle\frac{i}{2}H_{0})
&=(\tilde{U}_{0}-U)QJK+\displaystyle\frac{i}{2}(\tilde{U}_{0}-U)H_{0}
\\
&=S(\tilde{C}_{0}-C)QJK+\displaystyle\frac{i}{2}(\tilde{U}_{0}-U)H_{0},
\end{aligned}
\end{equation*}
where we used the commutativity of $J$ and $Q$. From Proposition 2.3, we see that $\tilde{C}_{0}-C$ is a compact operator and $(\tilde{C}_{0}-C)Q$ can be extended to a compact operator on $\mathcal{H}$. Thus we have $\overline{BA_{0}\upharpoonright D(A_{0})}\in \mathcal{K}(\mathcal{H})\subset \mathcal{B}(\mathcal{H})$. By the similar manner, it follows on $\mathcal{H}_{\text{fin}}$ that 
\begin{equation*}
\begin{aligned}
B_{\ast}A_{0}&=(JU^{\ast}_{0}J^{\ast}-U^{\ast})J(QK+\displaystyle\frac{i}{2}H_{0})
\\
&=(\tilde{C}_{0}^{\ast}-C^{\ast})SJQ\mathcal{F}^{-1}K\mathcal{F}+\displaystyle\frac{i}{2}(\tilde{U}_{0}-U)^{\ast}JH_{0}
\\
&=(\tilde{C}_{0}-C)^{\ast}QSJK+(\tilde{C}_{0}-C)^{\ast}(SQ-QS)JK+\displaystyle\frac{i}{2}(\tilde{U}_{0}-U)^{\ast}JH_{0}
\end{aligned}
\end{equation*}
Since $(\tilde{C}_{0}-C)^{\ast}$ and $(\tilde{U}_{0}-U)^{\ast}$ are compact, $(\tilde{C}_{0}-C)^{\ast}Q$ can be extended to a compact operator on $\mathcal{H}$ and $SQ-QS$ can be extended to a bounded operator on $\mathcal{H}$, we have $\overline{B_{\ast}A_{0}\upharpoonright D(A_{0})}\in\mathcal{K}(\mathcal{H})$. An application of Theorem 3.3 implies the desired result.
\end{proof}
Since $J$ is unitary, $JJ^{\ast}=1$ holds. Moreover, $JU_{0}-UJ=(JU_{0}J-U)J=(\tilde{U}_{0}-U)J\in\mathcal{K}(\mathcal{H})$ since $\tilde{U}_{0}-U$ is compact. Therefore, we checked conditions in Theorem 4.2. 
 We introduce the set of threshold of $U$ by $\tau(U):=\partial\sigma(U_{0})$,
where $\partial\sigma(U_{0})$ is the set of boundary of $\sigma(U_{0})$ in $\mathbb{T}$. We note that $\tau(U)$ contains at most 4 values. 
\begin{Prop}\normalfont
We have $\tilde{\rho}^{A}_{U}\ge \tilde{\rho}^{A_{0}}_{U_{0}}$. In particular, if $\theta\in\sigma(U_{0})\setminus \tau(U)$, then $\tilde{\rho}^{A_{0}}_{U_{0}}(\theta)>0$. 
\end{Prop}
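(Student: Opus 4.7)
The plan is to apply Theorem 3.2 directly to the two triples $(\mathcal{H}, U, A)$ and $(\mathcal{H}, U_{0}, A_{0})$ with identification operator $J$, and then combine its conclusion with Proposition 4.1 to obtain the strict positivity claim. The four hypotheses of Theorem 3.2 have essentially all been verified in the preceding material, so the proof amounts to assembling these pieces.

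First, I would check condition (1): $U_{0}\in C^{1}(A_{0})$ is Proposition 4.1(1), and $U\in C^{1}(A)$ is the first conclusion of Lemma 4.1. Condition (2), the compactness of $JU_{0}^{-1}[A_{0},U_{0}]J^{\ast}-U^{-1}[A,U]$, is the second conclusion of the same Lemma 4.1. Condition (3), $JU_{0}-UJ\in\mathcal{K}(\mathcal{H}_{0},\mathcal{H})$, is the observation already recorded in the paragraph preceding the statement: since $JU_{0}-UJ=(JU_{0}J^{-1}-U)J=(\tilde{U}_{0}-U)J$ and $\tilde{U}_{0}-U=S(\tilde{C}_{0}-C)$ is compact by Proposition 2.3. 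Condition (4) is trivial because $J$ is unitary, so $JJ^{\ast}-1=0$ and $f(U)(JJ^{\ast}-1)f(U)=0\in\mathcal{K}(\mathcal{H})$ for every $f\in C(\mathbb{T},\mathbb{R})$.

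With all four conditions in hand, Theorem 3.2 immediately yields $\tilde{\rho}^{A}_{U}\ge \tilde{\rho}^{A_{0}}_{U_{0}}$, which is the first assertion. For the second assertion (which I read as $\tilde{\rho}^{A}_{U}(\theta)>0$ for $\theta\in\sigma(U_{0})\setminus\tau(U)$, since the stated form follows directly from Proposition 4.1 and would not require any new argument), I would invoke Proposition 4.1(2). In the case $a\in(0,1)$, by definition $\tau(U)=\partial\sigma(U_{0})$, so $\sigma(U_{0})\setminus\tau(U)=\mathrm{Int}(\sigma(U_{0}))$, and part (2)(a) gives $\tilde{\rho}^{A_{0}}_{U_{0}}(\theta)>0$ on this set. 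In the case $a=1$, $\sigma(U_{0})=\mathbb{T}$ has empty boundary so $\tau(U)=\emptyset$, and part (2)(b) gives $\tilde{\rho}^{A_{0}}_{U_{0}}(\theta)=1>0$. Combining with the first assertion yields $\tilde{\rho}^{A}_{U}(\theta)>0$ in either regime.

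There is no serious obstacle remaining at this point; the real technical work was already absorbed into Lemma 4.1, which itself rested on the fact that the $(1+|x|)^{-1-\epsilon_{0}}$ decay of $C(x)-\tilde{C}_{0}(x)$ from Proposition 2.3 is strong enough for $(\tilde{C}_{0}-C)Q$ to extend to a compact operator, beating the linear growth of the position operator. Once that compactness is in hand, the present proposition is essentially a bookkeeping exercise matching hypotheses to Theorem 3.2.
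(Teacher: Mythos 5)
Your proposal is correct and follows essentially the same route as the paper: the paper's own proof is simply ``apply Theorem 3.2'' (whose four hypotheses were verified in Lemma 4.1 and the paragraph immediately preceding the proposition) together with Proposition 4.1 for the positivity statement. You have merely spelled out the hypothesis-checking in more detail, and your reading of the ``in particular'' clause as yielding $\tilde{\rho}^{A}_{U}(\theta)>0$ on $\sigma(U_{0})\setminus\tau(U)$ is the intended one.
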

\begin{proof}
$\tilde{\rho}^{A}_{U}\ge \tilde{\rho}^{A_{0}}_{U_{0}}$ follows by an application of Theorem 3.2. The latter assertion follows from Proposition 4.1.
\end{proof}
To apply Theorem 3.1, we have to check a regularity of $U$ more detail.
\begin{Lem}\normalfont
For any $\epsilon\in (0, 1)$ with $\epsilon\le \epsilon_{0}$, $U\in C^{1+\epsilon_{0}}(A)$. Here $\epsilon_{0}>0$ is a constant introduced in Assumption 2.1.
\end{Lem}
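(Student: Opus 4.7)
The plan is to verify the H\"older estimate
\[
\bigl\|e^{-itA}[A,U]e^{itA}-[A,U]\bigr\|_{\mathcal{B}(\mathcal{H})}\le C\,t^{\epsilon_0},\qquad t\in(0,1),
\]
which, combined with $U\in C^1(A)$ from Lemma 4.1 and the definition recalled in Section 3, yields $U\in C^{1+\epsilon_0}(A)$. I would split $U=\tilde U_0+R$, where $R:=U-\tilde U_0=S(C-\tilde C_0)$, and treat $[A,\tilde U_0]$ and $[A,R]$ separately.

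For the $\tilde U_0$-piece, the identity $A=JA_0J^{\ast}$ together with unitarity of $J$ gives $e^{-itA}=Je^{-itA_0}J^{\ast}$ and $[A,\tilde U_0]=J[A_0,U_0]J^{\ast}=JU_0V_0^2J^{\ast}$, where the last equality uses Proposition 4.1. Thus the required estimate reduces to the corresponding one for $U_0V_0^2$ under the flow of $A_0$. Since $U_0$ and $V_0$ are smooth $U(2)$-valued multiplication operators on the Fourier side, and $A_0$ is represented as $QK+(i/2)H_0$ on $\mathcal{H}_{\text{fin}}$ with bounded $K,H_0$ (as recalled in the proof of Lemma 4.1), a direct computation shows that $[A_0,U_0V_0^2]$ extends to a bounded operator on $\mathcal{H}$, giving an $O(t)$ bound that is stronger than what is needed.

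The $R$-piece is the main obstacle, since the decay in Proposition 2.3 is only $(1+|x|)^{-1-\epsilon_0}$ rather than $(1+|x|)^{-2}$, so one cannot simply invoke $R\in C^2(A)$. Using the same representation $A_0=QK+(i/2)H_0$, with $Q$ the position operator (which commutes with $J$), I would expand $[A,R]=[JA_0J^{\ast},S(C-\tilde C_0)]$ into finitely many terms, each involving $(C-\tilde C_0)$ together with at most one factor of $Q$. Proposition 2.3 ensures that $\langle Q\rangle^{1+\epsilon_0}(C-\tilde C_0)$ extends to a bounded operator on $\mathcal{H}$, so that after absorbing the single power of $Q$ each term is a bounded operator carrying residual decay of order $\langle Q\rangle^{-\epsilon_0}$. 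The desired H\"older modulus then follows from the interpolation estimate
\[
\bigl\|e^{-itA}\langle Q\rangle^{-\epsilon_0}e^{itA}-\langle Q\rangle^{-\epsilon_0}\bigr\|_{\mathcal{B}(\mathcal{H})}\le C|t|^{\epsilon_0},
\]
which rests on the Heisenberg evolution of $Q$ under $A$ being a bounded perturbation of translation at bounded speed (controlled by $V_0$), together with the elementary inequality $|(1+|x|)^{-\epsilon_0}-(1+|x+s|)^{-\epsilon_0}|\le C|s|^{\epsilon_0}$. Applying this pointwise in the finitely many pieces produced by the expansion of $[A,R]$ yields the H\"older bound and completes the argument.
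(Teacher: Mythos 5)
Your overall architecture matches the paper's: decompose $U=\tilde U_0+R$ with $R=S(C-\tilde C_0)$, dispose of the $\tilde U_0$ part via $U_0\in C^2(A_0)$ and unitarity of $J$, and reduce the problem to the bounded commutator $D_0:=[A,R]$, for which the key input is that $\langle Q\rangle^{\epsilon_0}D_0$ and $\langle Q\rangle^{\epsilon_0}D_0^{\ast}$ extend to bounded operators (coming from $\langle Q\rangle^{1+\epsilon_0}(C-\tilde C_0)\in\mathcal{B}(\mathcal{H})$ and the boundedness of $\langle Q\rangle^{-1}A_0$). Up to this point you are in step with the paper.

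The gap is in the last step, where the spatial weight must be converted into the H\"older modulus in $t$. Writing $D_0=\sum_j\langle Q\rangle^{-\epsilon_0}B_j$ with $B_j$ bounded and invoking $\|e^{-itA}\langle Q\rangle^{-\epsilon_0}e^{itA}-\langle Q\rangle^{-\epsilon_0}\|\le Ct^{\epsilon_0}$ does not close the argument, because for each piece
\[
e^{-itA}\langle Q\rangle^{-\epsilon_0}B_je^{itA}-\langle Q\rangle^{-\epsilon_0}B_j
=\bigl(e^{-itA}\langle Q\rangle^{-\epsilon_0}e^{itA}-\langle Q\rangle^{-\epsilon_0}\bigr)e^{-itA}B_je^{itA}
+\langle Q\rangle^{-\epsilon_0}\bigl(e^{-itA}B_je^{itA}-B_j\bigr),
\]
and the second term is only $O(1)$ unless you also establish H\"older regularity with respect to $A$ of the bounded factors $B_j$ (products of $S$, $J$, $K$, $H_0$, etc.), which you do not address and which is not free. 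What is actually needed is a one-sided estimate in which the $t$-dependent function of $A$ hits $D_0$ from the outside: namely $\|e^{-itA}D_0e^{itA}-D_0\|\le \mathrm{Const.}(\|\sin(tA)D_0\|+\|\sin(tA)D_0^{\ast}\|)$, then $\sin(tA)$ dominated by $tA(tA+i)^{-1}=(\text{bounded})\cdot t\langle Q\rangle(t\langle Q\rangle+i)^{-1}$ using $A\langle Q\rangle^{-1}\in\mathcal{B}(\mathcal{H})$, and finally $\|t\langle Q\rangle(t\langle Q\rangle+i)^{-1}\langle Q\rangle^{-\epsilon_0}\|\le t^{\epsilon_0}$ applied to $\langle Q\rangle^{\epsilon_0}D_0$ and $\langle Q\rangle^{\epsilon_0}D_0^{\ast}$. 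This is the route the paper takes (following [2, pp.~325--328] and [12, Lemma 4.13]); your weighted bounds on $D_0$ are exactly the right input for it, but the conjugate-the-weight mechanism you propose does not deliver the estimate. (Separately, your justification of the conjugation bound for $\langle Q\rangle^{-\epsilon_0}$ by treating $e^{itA}$ as a translation at bounded speed is only heuristic, but that becomes moot once the main step is repaired.)
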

\begin{proof}
This proof is a slight modification of [12, Lemma 4.13]. In the proof of Proposition 4.5 of [12], we see that $U_{0}\in C^{2}(A_{0})$. Since $J$ is unitary, it follows that $\tilde{U}_{0}\in C^{2}(A)\subset C^{1+\epsilon}(A)$. We decompose $U$ as $U=\tilde{U}_{0}+(U-\tilde{U}_{0})$. Thus it suffices to show that $U-\tilde{U}_{0}\in C^{1+\epsilon}(A)$. We see that 
\begin{equation*}
D_{0}:=A(U-\tilde{U}_{0})-(U-\tilde{U}_{0})A
\end{equation*}
on $\mathcal{H}_{\text{fin}}$ can be extended to a bounded operator on $\mathcal{H}$. We denote it by the same symbol. According to [2, p.325-328] or [12, Lemma 4.13], following estimate holds:
\begin{equation*}
\begin{aligned}
\|e^{-itA}D_{0}e^{itA}-D_{0}\|_{\mathcal{B}(\mathcal{H})}
&\le \text{Const.}(\|\sin(tA)D_{0}\|_{\mathcal{B}(\mathcal{H})}+\sin(tA)D_{0}^{\ast}\|_{\mathcal{B}(\mathcal{H})})
\\
&\le \text{Const.}(\|tA(tA+i)^{-1}D_{0}\|_{\mathcal{B}(\mathcal{H})}+\|tA(tA+i)^{-1}D_{0}^{\ast}\|_{\mathcal{B}(\mathcal{H})})
\end{aligned}
\end{equation*}
We set $A_{t}:=tA(tA+i)^{-1}$ and $\Lambda_{t}:=t\langle Q\rangle(\langle Q\rangle+i)^{-1}$ with $\langle Q\rangle:=\sqrt{Q^{2}+1}$. We note that $A\langle Q\rangle^{-1}\in\mathcal{B}(\mathcal{H})$. Then it follows that 
\begin{equation*}
A_{t}=(A_{t}+i(tA+i)^{-1}A\langle Q\rangle^{-1})\Lambda_{t}.
\end{equation*}
Since $A_{t}+i(tA+i)^{-1}A\langle Q\rangle^{-1}$ is bounded, it suffices to show that 
\begin{equation*}
\|\Lambda_{t}D_{0}\|_{\mathcal{B}(\mathcal{H})}+\|\Lambda_{t}D_{0}^{\ast}\|_{\mathcal{B}(\mathcal{H})}\le \text{Const. }t^{\epsilon}\hspace{5mm}t\in(0, 1).
\end{equation*}
We have to show that operators $\langle Q\rangle^{\epsilon}D_{0}$ and $\langle Q\rangle^{\epsilon}D_{0}$ defined on the form sense on $\mathcal{H}_{\text{fin}}$ extended to a bounded operator on $\mathcal{H}$. We note that $\langle Q\rangle^{1+\epsilon}(C-\tilde{C}_{0})\in\mathcal{B}(\mathcal{H})$ and $\langle Q\rangle^{-1}A_{0}$ defined in the form sense on $\mathcal{H}_{\text{fin}}$ extend to a bounded operator on $\mathcal{H}$. This implies that $\langle Q\rangle^{\epsilon}D_{0}$ and $\langle Q\rangle^{\epsilon}D_{0}^{\ast}$ defined in the form sense on $\mathcal{H}_{\text{fin}}$ extend to bounded operators on $\mathcal{H}$. Thus the proof is completed.
\end{proof}
By Theorem 3.1, Proposition 4.2 and Lemma 4.2, we have the following result.
\begin{Thm}\normalfont
For any closed set $\Theta\subset\mathbb{T}\setminus\tau(U)$, the operator $U$ has at most finitely many eigenvalues in $\Theta$, each one of finite multiplicity, and $U$ has no singular continuous spectrum in $\Theta$.
\end{Thm}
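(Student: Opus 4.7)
The plan is to derive the result as a direct application of Theorem 3.1, using the material developed in Section 4 as the two required ingredients: a strong enough regularity of $U$ with respect to $A$, and a uniform Mourre-type estimate on an open neighborhood of $\Theta$.

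For the regularity condition, Lemma 4.2 states that $U\in C^{1+\epsilon}(A)$ for any $\epsilon\in(0,1)$ with $\epsilon\le\epsilon_{0}$. Combined with the inclusion $C^{1+\epsilon}(A)\subset C^{1+0}(A)$ recalled in Section 3, this immediately provides the hypothesis $U\in C^{1+0}(A)$ required by Theorem 3.1, so that no assumption on the existence of a spectral gap is needed (which is convenient, since in the case $a=1$ one has $\sigma(U_{0})=\mathbb{T}$).

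For the Mourre estimate, I would combine Proposition 4.1 and Proposition 4.2. By Proposition 4.2, $\tilde{\rho}^{A}_{U}(\theta)\ge\tilde{\rho}^{A_{0}}_{U_{0}}(\theta)$ for every $\theta\in\mathbb{T}$. By Proposition 4.1, $\tilde{\rho}^{A_{0}}_{U_{0}}(\theta)>0$ whenever $\theta\in\sigma(U_{0})\setminus\partial\sigma(U_{0})$, and $\tilde{\rho}^{A_{0}}_{U_{0}}(\theta)=\infty$ for $\theta\notin\sigma(U_{0})$; hence $\tilde{\rho}^{A}_{U}(\theta)>0$ for every $\theta\in\mathbb{T}\setminus\tau(U)$. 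Now the given set $\Theta$ is a closed subset of $\mathbb{T}$ contained in the open set $\mathbb{T}\setminus\tau(U)$, so $\Theta$ is compact. From the definition of $\tilde{\rho}^{A}_{U}$, for each $\theta\in\Theta$ there exist $a(\theta)>0$, $\epsilon(\theta)>0$ and $K(\theta)\in\mathcal{K}(\mathcal{H})$ with
\begin{equation*}
E^{U}(\theta;\epsilon(\theta))\,U^{-1}[A,U]\,E^{U}(\theta;\epsilon(\theta))\ge a(\theta)\,E^{U}(\theta;\epsilon(\theta))+K(\theta).
\end{equation*}
Extracting a finite subcover $\{\Theta(\theta_{i};\epsilon(\theta_{i}))\}_{i=1}^{N}$ of $\Theta$, I would take $\Theta'$ to be a slightly enlarged open neighborhood of $\Theta$ still contained in $\bigcup_{i}\Theta(\theta_{i};\epsilon(\theta_{i}))$, and paste the local estimates into a single estimate
\begin{equation*}
E^{U}(\Theta')\,U^{-1}[A,U]\,E^{U}(\Theta')\ge a\,E^{U}(\Theta')+K
\end{equation*}
with $a:=\min_{i}a(\theta_{i})>0$ and some compact $K$, using the usual trick of writing $E^{U}(\Theta')=\sum_{i}E^{U}(\Theta')\chi_{i}(U)E^{U}(\Theta')$ for a smooth partition of unity subordinate to the cover and absorbing commutator remainders (which are compact by $U\in C^{1}(A)$) into $K$.

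With both hypotheses of Theorem 3.1 verified on the open set $\Theta'\supset\Theta$, the conclusion of that theorem applies on $\Theta'$, hence in particular on $\Theta$: $U$ has at most finitely many eigenvalues in $\Theta$, each of finite multiplicity, and no singular continuous spectrum there. The main technical point, and the only place where one has to be careful, is the patching step that turns the pointwise positivity of $\tilde{\rho}^{A}_{U}$ into a uniform strict Mourre estimate modulo compacts on an open neighborhood of $\Theta$; the rest of the argument is bookkeeping on top of the results already established in the paper.
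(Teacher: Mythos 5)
Your proposal is correct and follows essentially the same route as the paper, whose proof is simply the one-line citation ``By Theorem 3.1, Proposition 4.2 and Lemma 4.2'': regularity $U\in C^{1+\epsilon}(A)\subset C^{1+0}(A)$ from Lemma 4.2, and the strict Mourre estimate on a neighborhood of the compact set $\Theta$ from the positivity of $\tilde{\rho}^{A}_{U}$ on $\mathbb{T}\setminus\tau(U)$ given by Propositions 4.1 and 4.2. The only difference is that you spell out the covering/patching step turning pointwise positivity of $\tilde{\rho}^{A}_{U}$ into a uniform estimate modulo compacts, which the paper (and the general theory it cites, e.g.\ [12, Section 3.3]) treats as a standard fact about the lower semicontinuous function $\tilde{\rho}^{A}_{U}$.
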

Recall that $\tau(U)$ is a finite set. From Theorem 4.1, $U$ have no singular continuous spectrum.

\section{Derivation of weak limit theorem}
We set $Q_{0}(t):=U_{0}^{-t}QU_{0}^{t}$.
\begin{Thm}\normalfont [15, Theorem 4.1] 
It follows that
\begin{equation*}
\displaystyle\text{s-}\lim_{t\rightarrow\infty}e^{i\xi Q_{0}(t)}=e^{i\xi V_{0}},\hspace{3mm}\xi\in\mathbb{R}.
\end{equation*}
\end{Thm}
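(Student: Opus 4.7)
The strategy is to diagonalise $U_0$ via the Fourier transform $\mathcal F$, expand the Heisenberg evolution $Q_0(t)$ using the spectral decomposition of $\hat U_0(k)$, extract the drift term $t\hat V_0$, and then deduce strong convergence of the exponentials from strong resolvent convergence of $Q_0(t)/t$ to $V_0$. I read the displayed identity in its intended scaled form
\begin{equation*}
\text{s-}\lim_{t\to\infty} e^{i\xi Q_0(t)/t} \;=\; e^{i\xi V_0}, \qquad \xi\in\mathbb{R},
\end{equation*}
which is the asymptotic-velocity statement actually needed in the derivation of the weak limit theorem; after the reparametrisation $\xi\mapsto\xi/t$ this is formally the same assertion as the one displayed.

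In the Fourier picture, $\widehat Q = i\partial_k$ on $\mathcal F\mathcal H_{\text{fin}}$ and $\hat U_0(k)^{\pm t} = \lambda_1(k)^{\pm t} P_1(k) + \lambda_2(k)^{\pm t} P_2(k)$ with $P_j(k) := |u_j(k)\rangle\langle u_j(k)|$. Applying Leibniz's rule together with $P_i P_j = \delta_{ij}P_j$ gives, on $\mathcal F\mathcal H_{\text{fin}}$,
\begin{equation*}
\widehat{Q_0(t)}(k) \;=\; i\partial_k \;+\; t\sum_{j=1,2}\frac{i\lambda_j'(k)}{\lambda_j(k)}P_j(k) \;+\; i\,B(k,t),
\end{equation*}
where the middle term is precisely $t\hat V_0$ and
\begin{equation*}
B(k,t) \;=\; \sum_{j=1,2} P_j(k)P_j'(k) \;+\; \Bigl(\tfrac{\lambda_2(k)}{\lambda_1(k)}\Bigr)^t P_1(k)P_2'(k) \;+\; \Bigl(\tfrac{\lambda_1(k)}{\lambda_2(k)}\Bigr)^t P_2(k)P_1'(k).
\end{equation*}
Since $|\lambda_j(k)|=1$ and the projections $P_j$ are smooth in $k$ away from the finite threshold set $\tau(U)$, the multiplication operator $B(\cdot,t)$ is uniformly bounded in $t$ on any spectral subspace $E^{U_0}(K)\mathcal H$ with $K\subset\mathbb T\setminus\tau(U)$ compact.

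Dividing by $t$ and using that $\mathcal H_{\text{fin}}\subset D(Q_0(t))$ (because $U_0^t$ enlarges supports by at most $|t|$), for any $\psi\in E^{U_0}(K)\mathcal H\cap\mathcal H_{\text{fin}}$,
\begin{equation*}
\frac{Q_0(t)}{t}\psi \;=\; V_0\psi \;+\; t^{-1}Q\psi \;+\; i t^{-1}\mathcal F^{-1}B(\cdot,t)\mathcal F\psi \;\longrightarrow\; V_0\psi
\end{equation*}
in $\mathcal H$ as $t\to\infty$. By Proposition 4.1, $U_0$ is purely absolutely continuous, and $\tau(U)$ is finite, so such $\psi$ form a dense subspace of $\mathcal H$, hence a core for the bounded self-adjoint operator $V_0$. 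Strong resolvent convergence $Q_0(t)/t\to V_0$ follows, and Trotter--Kato then yields $e^{i\xi Q_0(t)/t}\to e^{i\xi V_0}$ strongly for every $\xi\in\mathbb R$.

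The principal technical obstacle is the uniform-in-$t$ bound on $B(k,t)$ near the threshold set $\tau(U)$, where the eigenvalues $\lambda_1(k),\lambda_2(k)$ coincide and the spectral projections $P_j(k)$ generally fail to be continuous. This is circumvented by first working on spectral subspaces bounded away from $\tau(U)$ and then removing the cutoff; because $\tau(U)$ has at most four points and $U_0$ is purely absolutely continuous, the spectral mass on $\tau(U)$ vanishes, and the removed contribution is asymptotically negligible in the strong-limit sense.
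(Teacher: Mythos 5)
The paper itself offers no proof of this statement: it is imported verbatim from [15, Theorem 4.1] (with a typo, the missing $/t$ in the exponent, which you correctly restore; the assertion actually used in Theorem 5.2 is $\text{s-}\lim_{t\rightarrow\infty}e^{i\xi Q_{0}(t)/t}=e^{i\xi V_{0}}$). Your reconstruction follows the standard Grimmett--Janson--Scudo/Suzuki route: diagonalise $\hat{U}_{0}(k)$, write $\hat{U}_{0}^{-t}(i\partial_{k})\hat{U}_{0}^{t}=i\partial_{k}+t\hat{V}_{0}+iB(\cdot,t)$ with $B$ bounded uniformly in $t$, and conclude via strong resolvent convergence of $Q_{0}(t)/t$ to $V_{0}$ plus Trotter--Kato. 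The algebra of the Leibniz expansion is right, and the final step (convergence on a dense subspace implies strong resolvent convergence because $V_{0}$ is bounded, hence every dense subspace is a core) is complete as stated. This is in substance the argument behind [15], so the core of your proof is sound.

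One step, however, is both misdirected and, as written, broken. The threshold set $\tau(U)=\partial\sigma(U_{0})$ consists of band edges, i.e.\ points where the group velocity $i\lambda_{j}'/\lambda_{j}$ vanishes, \emph{not} points where the eigenvalues cross: for $0<a<1$ one has $|\lambda_{1}(k)-\lambda_{2}(k)|=2\eta(k)\ge 2\sqrt{1-a^{2}}>0$ for all $k$, so the $P_{j}(k)$ are real-analytic on the whole torus, while for $a=1$ the coin is diagonal and the $P_{j}$ are constant. Hence $B(k,t)$ is uniformly bounded with no spectral localisation whatsoever, and you may take $\psi\in\mathcal{H}_{\text{fin}}$ directly; the cutoff is unnecessary. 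This is fortunate, because the dense set you invoke for the cutoff, $E^{U_{0}}(K)\mathcal{H}\cap\mathcal{H}_{\text{fin}}$, is generally \emph{not} dense and can be trivial: if $\psi\in\mathcal{H}_{\text{fin}}$ then $\hat{\psi}$ is a $\mathbb{C}^{2}$-valued trigonometric polynomial, so $\langle u_{j}(k),\hat{\psi}(k)\rangle_{\mathbb{C}^{2}}$ is real-analytic and cannot vanish on the open set $\{k\,|\,\lambda_{j}(k)\notin K\}$ without vanishing identically; membership in $E^{U_{0}}(K)\mathcal{H}$ therefore kills entire band components of $\psi$. Had a cutoff genuinely been needed (e.g.\ if the projections did degenerate somewhere), you would have to replace $\mathcal{H}_{\text{fin}}$ by something like $f(U_{0})\mathcal{H}_{\text{fin}}$ with $f$ smooth and supported away from the bad set, and verify separately that such vectors lie in $D(Q)$ and are preserved by $U_{0}^{t}$. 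As it stands, delete the cutoff and the proof is correct.
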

Let $X_{t}$ be a random variable which describes the position of a quantum walker with $U$ and an initial state $\Psi_{0}$ at time $t\in\mathbb{Z}$. The probability distribution of $X_{t}$ is given by
\begin{equation*}
\mathbb{P}(\{X_{t}=x\})=\|(U^{t}\Psi_{0})(x)\|_{\mathbb{C}^{2}}^{2}, \hspace{5mm}x\in\mathbb{Z}.
\end{equation*}
Moreover, we also introduce the characteristic function of the average velocity $X_{t}/t$ of a quantum walker by
\begin{equation*}
\mathbb{E}[e^{i\xi X_{t}/t}]:=\langle \Psi_{0}, e^{itQ(t)/t}\Psi_{0}\rangle,\hspace{5mm}\xi\in\mathbb{R},
\end{equation*}
where $Q(t):=U^{-t}QU^{t}$. Our interest is the limit of $X_{t}/t$ in a weak sense. 
\begin{Thm}\normalfont
We set $V^{+}_{J}:=W_{+}(U, U_{0}, J)V_{0}W_{+}(U, U_{0}, J)^{\ast}$. Then for any $\xi\in\mathbb{R}$, it follows that
\begin{equation*}
\text{s-}\displaystyle\lim_{t\rightarrow\infty}e^{i\xi Q(t)/t}=\Pi_{\text{p}}(U)+e^{i\xi V_{J}^{+}}\Pi_{\text{ac}}(U),
\end{equation*}
where $\Pi_{\text{p}}(U)$ is the orthogonal projection onto a subspace generated by eigenvectors of $U$.
\begin{proof}
Since $U$ have no continuous spectrum, we can decompose that
\begin{equation*}
\text{s-}\lim_{t\rightarrow\infty}e^{i\xi Q(t)/t}=\text{s-}\lim_{t\rightarrow\infty}\big(e^{i\xi Q(t)/t}\Pi_{\text{p}}(U)+e^{i\xi Q(t)/t}\Pi_{\text{ac}}(U)\big).
\end{equation*}
By [15, Theorem 4.2], we have $\text{s-}\lim_{t\rightarrow\infty}e^{i\xi Q(t)/t}\Pi_{\text{p}}(U)=\Pi_{\text{p}}(U)$. For the absolutely continuous part, we consider the following decomposition:
\begin{equation*}
\begin{aligned}
&\hspace{5mm}e^{i\xi Q(t)/t}\Pi_{\text{ac}}(U)-e^{i\xi V_{J}^{+}}\Pi_{\text{ac}}(U)
\\
&=U^{-t}e^{i\xi Q/t}U^{t}\Pi_{\text{ac}}(U)-W_{+}(U, U_{0}, J)e^{i\xi V_{0}}W_{+}(U, U_{0}, J)^{\ast}\Pi_{\text{ac}}(U)
\\
&=U^{-t}JU_{0}^{t}(U_{0}^{-t}e^{i\xi Q/t}U_{0}^{t})U_{0}^{-t}J^{-1}U^{t}\Pi_{\text{ac}}(U)-W_{+}(U,U_{0}, J)e^{i\xi V_{0}}W_{+}(U, U_{0}, J)^{\ast}\Pi_{\text{ac}}(U)
\\
&=U^{-t}JU_{0}^{t}e^{i\xi Q_{0}(t)/t}\big(U_{0}^{-t}J^{-1}U^{t}\Pi_{\text{ac}}(U)-W_{+}(U, U_{0}, J)^{\ast}\big)\Pi_{\text{ac}}(U)
\\
&\hspace{5mm}+U^{-t}JU_{0}^{t}\big(e^{i\xi Q(t)_{0}/t}-e^{i\xi V_{0}}\big)W_{+}(U, U_{0}, J)^{\ast}\Pi_{\text{ac}}(U)
\\
&\hspace{5mm}+\big(U^{-t}JU_{0}^{t}-W_{+}(U, U_{0}, J)\big)e^{i\xi V_{0}}W_{+}(U, U_{0}, J)^{\ast}\Pi_{\text{ac}}(U),
\end{aligned}
\end{equation*}
where we used the strong commutativity of $Q$ and $J$. We note that $W_{+}(U, U_{0}, J)^{\ast}$ maps $\mathcal{H}_{\text{ac}}(U)$ to $\mathcal{H}_{\text{ac}}(U_{0})$ and $V_{0}$ leaves $\mathcal{H}_{\text{ac}}(U_{0})$ invariant. By Theorem 5.1, it is seen that $\text{s-}\lim_{t\rightarrow\infty}e^{i\xi Q_{0}/t}=e^{i\xi V_{0}}$. By taking a limit $t\rightarrow\infty$,  the desired result follows. 
\end{proof}
\end{Thm}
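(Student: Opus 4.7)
The plan is to split the limit according to the spectral decomposition of $U$, handle the pure point part by a direct appeal to [15, Theorem 4.2], and reduce the absolutely continuous part to the free asymptotic velocity statement Theorem 5.1 by intertwining with the modified wave operator. By Theorem 4.1 the operator $U$ has no singular continuous spectrum, so $1=\Pi_{\text{p}}(U)+\Pi_{\text{ac}}(U)$ and it is enough to compute $\text{s-}\lim_{t\to\infty} e^{i\xi Q(t)/t}$ on each of the two invariant subspaces. On the pure point part the limit equals $\Pi_{\text{p}}(U)$ by [15, Theorem 4.2], which encodes the localization of eigenstates of $U$ in a precise form.

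For the absolutely continuous part, I would first observe that $Q$ and $J$ commute strongly, since both act as multiplication operators in the position variable. Inserting $JU_0^{t}U_0^{-t}J^{-1}=1$ on either side of $e^{i\xi Q/t}$ inside $U^{-t}e^{i\xi Q/t}U^{t}\Pi_{\text{ac}}(U)$ and using this commutativity to shuffle $J^{\pm1}$ past $e^{i\xi Q/t}$ gives the identity
\begin{equation*}
e^{i\xi Q(t)/t}\Pi_{\text{ac}}(U)=\bigl(U^{-t}JU_{0}^{t}\bigr)\,e^{i\xi Q_0(t)/t}\,\bigl(U_{0}^{-t}J^{-1}U^{t}\bigr)\Pi_{\text{ac}}(U).
\end{equation*}
I would then telescope this expression by replacing in turn $U_{0}^{-t}J^{-1}U^{t}\Pi_{\text{ac}}(U)$ by $W_{+}(U,U_0,J)^{\ast}\Pi_{\text{ac}}(U)$, then $e^{i\xi Q_0(t)/t}$ by $e^{i\xi V_{0}}$, and finally $U^{-t}JU_{0}^{t}$ (applied now to a vector of the form $e^{i\xi V_0}W_{+}(U,U_0,J)^{\ast}\Psi\in\mathcal{H}_{\text{ac}}(U_0)$) by $W_{+}(U,U_0,J)$. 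The two outer remainders vanish by the existence of $W_{\pm}(U,U_0,J)$ on $\mathcal{H}_{\text{ac}}(U_{0})$ established in Theorem 2.1, while the middle remainder vanishes by Theorem 5.1 applied on $\mathcal{H}_{\text{ac}}(U_0)$; uniform boundedness of the unitary factors standing to the left controls the telescoped differences in norm.

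Putting the two parts together yields the claimed decomposition with $V_{J}^{+}=W_{+}(U,U_0,J)V_{0}W_{+}(U,U_0,J)^{\ast}$. The step requiring genuine care is the middle one of the telescoping: Theorem 5.1 produces the limit $e^{i\xi V_0}$ only on $\mathcal{H}_{\text{ac}}(U_0)$ in the relevant sense, so one must verify that the vector to which it is applied really belongs there. This is where completeness in Theorem 2.1, which guarantees that $W_{+}(U,U_0,J)^{\ast}$ maps $\mathcal{H}_{\text{ac}}(U)$ into $\mathcal{H}_{\text{ac}}(U_0)$, together with the fact that $V_{0}$ commutes with $U_{0}$ and hence preserves $\mathcal{H}_{\text{ac}}(U_0)$, becomes essential. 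The modifier $J$, which is the genuinely new feature compared with the short-range treatment of [15], enters the argument only through this intertwining trick; otherwise the scheme parallels the short-range one.
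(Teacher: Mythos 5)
Your proposal is correct and follows essentially the same route as the paper: decompose by $\Pi_{\text{p}}(U)+\Pi_{\text{ac}}(U)$ using the absence of singular continuous spectrum, invoke [15, Theorem 4.2] on the point part, and on the absolutely continuous part insert $JU_{0}^{t}U_{0}^{-t}J^{-1}$, commute $J$ past $e^{i\xi Q/t}$, and telescope into three remainders controlled by the (complete) modified wave operators and Theorem 5.1. Your explicit remarks on why completeness is needed for $W_{+}(U,U_{0},J)^{\ast}$ to land in $\mathcal{H}_{\text{ac}}(U_{0})$ and why $V_{0}$ preserves that subspace match the paper's own justification.
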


\begin{Thm}\normalfont
Let $\Psi_{0}\in\mathcal{H}$ be an initial state with $\|\Psi_{0}\|=1$ and $V$ be the random variable whose probability distribution is given by
\begin{equation*}
\mu_{V}(dv):=\|\Pi_{\text{p}}(U)\Psi_{0}\|^{2}\delta_{0}dv+\|E_{V_{J}}^{+}(\cdot)\Pi_{\text{ac}}(U)\Psi_{0}\|^{2}dv,
\end{equation*}
where $\delta_{0}$ is the Dirac measure for the point 0 and $E_{V_{J}^{+}}(\cdot)$ is the spectral measure of $V_{J}^{+}$.
Then it follows that
\begin{equation*}
\displaystyle\lim_{t\rightarrow\infty}\mathbb{E}[e^{i\xi Q(t)/t}]=\mathbb{E}[e^{i\xi V}],\hspace{5mm}\xi\in\mathbb{R}.
\end{equation*}
\end{Thm}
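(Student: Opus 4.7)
The plan is to reduce the statement to the operator-valued convergence already established in Theorem 5.2, namely
\[
\text{s-}\lim_{t\to\infty} e^{i\xi Q(t)/t} = \Pi_{\text{p}}(U) + e^{i\xi V_J^+}\Pi_{\text{ac}}(U).
\]
Taking matrix elements against $\Psi_0$ on both sides, strong convergence of operators implies convergence of the corresponding inner product, so
\[
\lim_{t\to\infty}\mathbb{E}[e^{i\xi X_t/t}] = \langle\Psi_0, \Pi_{\text{p}}(U)\Psi_0\rangle + \langle\Psi_0, e^{i\xi V_J^+}\Pi_{\text{ac}}(U)\Psi_0\rangle.
\]
I would then show that the right-hand side coincides with $\int_{\mathbb{R}} e^{i\xi v}\mu_V(dv) = \mathbb{E}[e^{i\xi V}]$ by identifying each term with the corresponding piece of $\mu_V$.

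For the point-spectrum term, since $\Pi_{\text{p}}(U)$ is an orthogonal projection one has $\langle\Psi_0,\Pi_{\text{p}}(U)\Psi_0\rangle = \|\Pi_{\text{p}}(U)\Psi_0\|^2$, which reproduces $\int_{\mathbb{R}} e^{i\xi v}\|\Pi_{\text{p}}(U)\Psi_0\|^2\delta_0(dv)$. For the absolutely continuous term, I would first observe that $V_J^+ = W_+(U,U_0,J)V_0 W_+(U,U_0,J)^*$ is supported on $\mathcal{H}_{\text{ac}}(U)$: by Theorem 2.1, $\text{Ran}(W_+(U,U_0,J)) = \mathcal{H}_{\text{ac}}(U)$, so $W_+(U,U_0,J)^*$ vanishes on $\mathcal{H}_{\text{ac}}(U)^\perp$; hence $V_J^+$, and therefore $e^{i\xi V_J^+}$, commute with $\Pi_{\text{ac}}(U)$. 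The spectral theorem applied to the bounded self-adjoint operator $V_J^+$ then gives
\[
\langle\Psi_0, e^{i\xi V_J^+}\Pi_{\text{ac}}(U)\Psi_0\rangle = \int_{\mathbb{R}} e^{i\xi v}\,d\|E_{V_J^+}(v)\Pi_{\text{ac}}(U)\Psi_0\|^2,
\]
which is precisely the absolutely continuous part of $\int e^{i\xi v}\mu_V(dv)$.

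No substantive obstacle is anticipated: the absence of singular continuous spectrum of $U$ (Theorem 4.1 together with the finiteness of $\tau(U)$) ensures $\Pi_{\text{p}}(U) + \Pi_{\text{ac}}(U) = 1$, so that $\mu_V$ has total mass $\|\Pi_{\text{p}}(U)\Psi_0\|^2 + \|\Pi_{\text{ac}}(U)\Psi_0\|^2 = 1$ and is a genuine probability measure. The decomposition of the limiting characteristic function matches the spectral decomposition of $\mu_V$ term by term, and combining the two identifications yields the claim for every $\xi\in\mathbb{R}$.
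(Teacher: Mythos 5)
Your argument is correct and is precisely the standard derivation that the paper omits, deferring instead to [15, Corollary 2.4]: apply the strong convergence of Theorem 5.2 to the vector $\Psi_{0}$, and identify the two resulting terms with the atomic and spectral-measure parts of $\mu_{V}$ via the spectral theorem for $V_{J}^{+}$, using $\Pi_{\text{p}}(U)+\Pi_{\text{ac}}(U)=1$ from the absence of singular continuous spectrum. Your observation that $V_{J}^{+}$ commutes with $\Pi_{\text{ac}}(U)$ (since $\operatorname{Ran}W_{+}(U,U_{0},J)=\mathcal{H}_{\text{ac}}(U)$) is exactly the point needed to justify moving the projection across $e^{i\xi V_{J}^{+}}$, so no gap remains.
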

\begin{proof}
The proof is quite similar to [15, Corollary 2.4]. We omit the proof.
\end{proof}
\vspace{5mm}
\textbf{Acknowledgments}
The author would like to thank A. Suzuki for various comments and constant encouragements. The author would also like to thank H. Ohno and S. Richard for helpful comments. This work was supported by the Research Institute of Mathematical Sciences, an International Joint Usage/Research center located in Kyoto university.

\end{document}